\newif\ifxetexorluatex
\newif\ifxetexorluatex
\theoremstyle{plain}
\newtheorem{definition}{Definition}[section]
\newtheorem{theorem}{Theorem}[section]
\newtheorem{corollary}{Corollary}[section]
\newtheorem{lemma}{Lemma}[section]
\newtheorem{proposition}{Proposition}[section]
\def\nimber{\mbox{\rm nimber}}
\def\mex{\mbox{\rm mex}}
\def\size{\mbox{\rm size}}
\newcommand{\zh}[1]{\begin{CJK}{UTF8}{gbsn}#1\end{CJK}}
\newcommand{\rsGeography}{\ruleset{Geography}}
\title{Nimber-Preserving Reduction: Game Secrets and\\ Homomorphic Sprague-Grundy Theorem}
\author{Kyle W. Burke
\\Plymouth State\\kwburke@plymouth.edu \and
Matthew T. Ferland
\\ USC\\mferland@usc.edu \and Shang-Hua Teng\thanks{
Supported by the Simons Investigator Award for fundamental \& curiosity-driven research and NSF grant CCF-1815254.}\\ USC\\
shanghua@usc.edu}
\begin{document}
\maketitle

\begin{abstract}
The concept of {\em nimbers}---a.k.a. {\em Grundy-values} or {\em nim-values}---is fundamental to combinatorial game theory.
Beyond the winnability, nimbers provide a complete characterization of strategic interactions among impartial games in disjunctive sums. 
In this paper, we consider  {\em nimber-preserving} reductions among impartial games, which enhance the  {\em winnability-preserving reductions} in traditional 
computational characterizations of combinatorial games.
We prove that \ruleset{Generalized Geography} is complete for the natural class, ${\cal I}^P$, of {\em polynomially-short} impartial rulesets, under polynomial-time nimber-preserving reductions.
We refer to this notion of completeness as {\em Sprague-Grundy-completeness}.
In contrast, we also show that not every \cclass{PSPACE}-complete ruleset in ${\cal I}^P$ is Sprague-Grundy-complete for ${\cal I}^P$.

By viewing every impartial game as an encoding of its nimber---a succinct game secret richer than its winnability alone---our technical result establishes the following striking cryptography-inspired homomorphic theorem:
 Despite the \cclass{PSPACE}-completeness of
nimber computation for ${\cal I}^P$,
 there exists a polynomial-time algorithm to construct,
  for any pair of games $G_1, G_2$
   in  ${\cal I}^P$,
   a \ruleset{Generalized Geography} game $G$ 
   satisfying:
$$\nimber(G) = \nimber(G_1) \oplus \nimber(G_2).$$


\end{abstract}

\newpage

\setcounter{page}{1}

\section{Introduction}

Mathematical games are fun and intriguing.
Even with {\em succinct} rulesets (which define game positions and the players' options from each position),
they can grow {\em game trees}
of size {\em exponential} in that
 of the starting positions.
A game is typically formulated for two players.
They take turns strategically selecting  from the current options to move the game state
to the next position.
In the {\em normal-play convention}, the player that faces a {\em terminal position}---a position with no feasible options—loses the game.
Over the years, rulesets have been formulated based on
graph theory \cite{WinningWays:2001,LichtensteinSipser:1980,ONAG:2001}, logic \cite{DBLP:journals/jcss/Schaefer78},
topology \cite{NashHex,Gale:1979,Reisch:1981,PosetGame,DBLP:journals/im/BurkeT08}, and other mathematical fields, often inspired by real-world phenomena \cite{goff2006quantum,akl2010importance,glos2019role,dorbec2017toward,wolfe2000go,burke2021transverse,applegatejacobsonsleator:1991,Zeilberger:2004}.
These rulesets distill fundamental mathematical concepts, structures, and dynamics.
For example:
\begin{itemize}
\item \ruleset{Node Kayles} \cite{DBLP:journals/jcss/Schaefer78} models a strategic game of growing a maximal independent set:
  Each position is an undirected
  graph, and each move consists of removing a node and its  neighbors.
\item \ruleset{Generalized Geography} \cite{DBLP:journals/jcss/Schaefer78,LichtensteinSipser:1980}
models a two-player game of traversing maximal paths:
 Each position is defined
  by a token in a directed graph
  and a move consists of removing the current vertex from the graph and  moving the token to an out-neighbor.  (It is often just referred to as \rsGeography.)
\item  \ruleset{Atropos} \cite{DBLP:journals/im/BurkeT08} models the dynamic formation of discrete equilibria (panchromatic triangles) in topological maps:  Positions are partially colored Sperner triangles, and a move consists of coloring a vertex.
\end{itemize}

The deep alternation of strategic reasoning also intrinsically connects optimal play in many games to
  highly intractable complexity classes, most commonly \cclass{PSPACE}.
After Even and Tarjan proved this
 for a generalization of Nash's \ruleset{Hex} \cite{EvenTarjanHex}, deciding {\em winnabilty} of many natural combinatorial games---including
\ruleset{Node-Kayles}, \ruleset{Generalized Geography},
\ruleset{Avoid True}, \ruleset{Proper-K-Coloring},
\ruleset{Atropos}, \ruleset{Graph Nim}, and \ruleset{Generalized Chomp}\footnote{We consider \ruleset{Generalized Chomp} to be \ruleset{Chomp}, but on any Directed Acyclic Graph.  This is equivalent to \ruleset{Finite Arbitrary Poset Game}, when the partial order can be evaluated in polynomial time.}---have been shown to be \cclass{PSPACE}-complete \cite{DBLP:journals/jcss/Schaefer78,LichtensteinSipser:1980,ONAG:2001,DBLP:journals/tcs/BeaulieuBD13,DBLP:journals/im/BurkeT08,PosetGame,Grier}.

\subsection{A Classical Mathematical Theory for Impartial Games}
Mathematical characterizations
of combinatorial games emerged prior to the age of modern computational complexity theory.
In 1901, Bouton \cite{Bouton:1901} developed a complete theory for
\ruleset{Nim}, based on an ancient Chinese game Jian Shi Zi (\zh{捡石子} - picking stones).
A \ruleset{Nim} position is a collection of piles of (small) stones.
On their turn, a player takes one or more stones from exactly one of the piles.
Representing each \ruleset{Nim} position by a list of integers,
Bouton \cite{Bouton:1901} proved
that  the current player has a winning strategy in the normal-play setting
if and only if the
{\em bitwise-exclusive-or} of
these integers (as {\em binary representations}) is not zero.

\ruleset{Nim} is an example of an
{\em impartial} ruleset,
 meaning both players have the
 same options at every position.
Games that aren't impartial are known as {\em partisan}.
The two graph games, \ruleset{Node-Kayles} and \ruleset{Generalized Geography} aforementioned, are also impartial.

In the 1930s, Sprague \cite{Sprague:1936} and Grundy \cite{Grundy:1939} independently
developed a comprehensive mathematical theory for impartial games.
They introduced a notion of {\em equivalence} among games, characterizing their contributions in the disjunctive sums with other impartial games.
Extending Bouton's theory for \ruleset{Nim}, Sprague-Grundy Theory provides a complete mathematical solution
to the disjunctive sums of impartial games.

\begin{definition}[Disjunctive Sum]
For any two game positions  $G$ and $H$ (respectively, of rulesets $R_1$ and $R_2$), their {\em disjunctive sum},  $G+H$, is
a game position in which 
the current player chooses to make a move in exactly one of $G$ and $H$, leaving the other unchanged.
A sum position $G+H$ is terminal if and only if
   both $G$ and $H$ are terminal according to their 
   own rulesets.
\end{definition}

Sprague and Grundy showed that every impartial game $G$ can be
equivalently replaced by a single-pile Nim game in any disjunctive sum involving $G$.
Thus, they characterized each impartial game by a natural number---now known as the {\em nimber}, {\em Grundy value}, or {\em nim-value}---which  corresponds to a number of stones in a single-pile of Nim.
Mathematically, the nimber of $G$, which we denote by $\nimber(G)$, can be
recursively formulated via $G$'s game tree:
(1) if $G$ is terminal, then $\nimber(G) = 0$;
otherwise
(2) if $\{G_1,...,G_{k}\}$ is the
  set options of $G$, letting $\mex$
  returns the smallest value of
  $\left(\mathbb{Z}^+\cup\{0\}\right)\setminus \left\{
  \nimber(G_1),...,\nimber(G_{t})
  \right\}$, then:
\begin{eqnarray}
\label{Eqn:nimber}
\nimber(G) = \mex\left(\left\{ \nimber(G_1),...,\nimber(G_{k})
\right\}\right)
\end{eqnarray}
Let $\oplus$ denote the {\bf\em bitwise xor} ({\bf\em nim-sum}).
By Bouton's \ruleset{Nim}
theory \cite{Bouton:1901}:
\begin{eqnarray}\label{SGE}
\nimber(G+H) = \nimber(G)\oplus \nimber(H) \quad\quad
\mbox{$\forall$ impartial $G$, $H$}
\end{eqnarray}
Thus, Sprague-Grundy Theory---using Bouton's \ruleset{Nim} solution---provides an instrumental mathematical summary (of the much larger game trees) that enhances the winnability for impartial games:
A position is a winning position if and only if its nimber
is non-zero.
This systematic framework  inspired subsequent work
in the field,
including Berlekamp, Conway, and Guy's \textit{Winning Ways for Your Mathematical Plays} \cite{WinningWays:2001}, and Conway's {\em On Numbers And Games} \cite{ONAG:2001}, which laid the foundation for Combinatorial Game Theory (CGT).
This 1930s theory
 also has an algorithmic
 implication.
Equation (\ref{SGE})
provides a polynomial-time framework for computing the nimber of a sum game---and the hence the winnability---from the nimbers of its component games: If the nimbers of two games $G$ and $H$ are tractable,
then $\nimber(G+H)$ is also tractable.

\subsection{Our Main Contributions}

Obviously, in spite of this algorithmic implication, Sprague-Grundy Theory does not provide a general-purpose polynomial-time solution for all impartial games, as witnessed by many \cclass{PSPACE}-hard rulesets, including
\ruleset{Node Kayles} and \ruleset{Generalized Geography} \cite{DBLP:journals/jcss/Schaefer78,LichtensteinSipser:1980}.
If one views the nimber characterization of
  an impartial game as a reduction from that game to a single pile \ruleset{Nim}, then  Schaefer {\em et al}'s complexity results  demonstrate that this reduction has intractable
  constructibility.
In fact, a recent result \cite{BurkeFerlandTengGrundy}
proved that the nimber of
polynomial-time solvable \ruleset{Undirected Geography}---i.e., \ruleset{Generalized Geography} on undirected graphs---is also \cclass{PSPACE}-complete to compute.
The sharp contrast between the complexity of winnability and nimber computation illustrates a fundamental mathematical-computational
divergence in Sprague-Grundy Theory \cite{BurkeFerlandTengGrundy}:
Nimbers can be  \cclass{PSPACE}-hard ``{\em secrets of deep alternation}''
even for polynomial-time solvable games.

Computational complexity theory often gives new perspectives of classical mathematical results.
In this work, it also provides us with a new lens for understanding this classical mathematical characterization as well as  tools for exploring and identifying new fundamental characterizations in combinatorial game theory.

\subsubsection{Polynomial-Time Nimber-Preserving Reduction to Generalized Geography}

In this paper, we consider the following natural concept of
  reduction among impartial games.

\begin{definition}[Nimber-Preserving Reduction\footnote{This natural concept of reduction in combinatorial game theory
can be viewed as the analog
of  {\em functional-preserving reductions} in various fields.
To name a few:
approximation-preserving, gap-preserving, structure-preserving reductions in complexity and algorithmic theory,
hardness-preserving and security-preservation in cryptography,
dimension-preserving, metric-preserving, and topology-preserving reductions in data analytics,
parameter-preserving reductions
in dynamic systems,
counterexample-preserving reductions in model checking,
query-number-preserving,
sample-preserving and high-order-moment-preserving in statistical analysis, and
modularity-preserving reductions in network modeling.
We are inspired by several of
these works.}]
A map $\phi$ is a nimber-preserving reduction from impartial ruleset $R_1$
to impartial ruleset $R_2$ if for every position  $G$ of $R_1$,
$\phi(G)$ is a position of $R_2$ satisfying $\nimber(G)=\nimber(\phi(G))$.
\end{definition}

Because an impartial position is a losing position if and only if its nimber is zero, nimber-preserving reductions enhance  {\em winnability-preserving reductions} in traditional complexity-theoretical characterizations of combinatorial games \cite{EvenTarjanHex,DBLP:journals/jcss/Schaefer78}:
Polynomial-time nimber-preserving reductions introduce the following natural notion of ``universal'' impartial rulesets.

\begin{definition}[Sprague-Grundy Completeness]
For a family $\cal J$ of impartial rulesets,
we say $R\in {\cal J}$ is a {\em Sprague-Grundy-complete ruleset} for ${\cal J}$ if
for any position $Z$ of ${\cal J}$, one can construct, {\em in polynomial time}, a position $G\in R$ such that $\nimber(G) = \nimber(Z)$.
Furthermore, if the polynomial-time nimber-preserving reduction always
produces a prime game, then we say $R\in {\cal J}$ is a {\bf prime} Sprague-Grundy-complete ruleset for ${\cal J}$.
\end{definition}


As the main technical contribution of this paper,
we prove the following theorem regarding the expressiveness of \ruleset{Generalized Geography}.
The natural family of rulesets containing \ruleset{Generalized Geography} is ${\cal I}^P$, the  family of all impartial rulesets whose positions have game trees with height
polynomial in the sizes of the positions.
We call games of ${\cal I}^P$ {\em polynomially-short} games.
In addition to \ruleset{Generalized Geography},
${\cal I}^P$ contains many combinatorial rulesets studied in the literature, including
\ruleset{Node Kayles}, \ruleset{Chomp},
\ruleset{Proper-K-Coloring},
\ruleset{Atropos}, and
\ruleset{Avoid True}, as well as
\ruleset{Nim} and \ruleset{Graph Nim} with polynomial numbers of stones. 


\begin{theorem}[A Complete Geography]
\ruleset{Generalized Geography}
\label{theo:GGComplete}
is a Sprague-Grundy complete  ruleset
for ${\cal I}^P$.
\end{theorem}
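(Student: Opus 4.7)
The plan is to give a polynomial-time construction that, for any position $Z$ of a ruleset $R \in {\cal I}^P$, outputs a \ruleset{Generalized Geography} position $G$ with $\nimber(G)=\nimber(Z)$. Since $Z$'s game tree has polynomial height and every move of $R$ is polynomial-time checkable, the strategy is to build a Geography graph whose play alternates with the play of $Z$, using a polynomial-size gadget for each simulated ply and a polynomial-depth composition connecting them.

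The main building block is a \emph{ply gadget} with a distinguished entry vertex encoding the current simulated position $z$ of $Z$, arranged so that the Geography sub-game rooted at the entry has nimber $\nimber(z)$. Internally, the gadget lets the player on turn name a descriptor of a move by a sequence of binary Geography choices of polynomial length, then routes the token through a polynomial-size verifier that checks whether the descriptor encodes a legal move at $z$. A legal descriptor hands the token to the entry of the next ply gadget, now representing the resulting state $z'$; an illegal descriptor is shunted into an absorbing sub-game whose nimber is arranged not to contribute to the top-level $\mex$. Because the height bound on $Z$'s game tree is polynomial, a polynomial-depth composition of these gadgets, capped by a nimber-$0$ gadget whenever $z$ is terminal, suffices, and the per-level description of $z$ is polynomial-size so the whole construction stays polynomial.

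The central obstacle, distinguishing this result from a standard \cclass{PSPACE}-completeness reduction, is \emph{exact} nimber preservation rather than mere winnability preservation. Intermediate binary-choice moves inside the ply gadget and edges of the verifier both introduce additional Geography options whose nimbers could contaminate the $\mex$ at the entry vertex. To preserve nimbers, the ply gadget must be engineered so that (i) the multiset of option-nimbers visible at the entry reduces under $\mex$ to exactly $\mex\{\nimber(z') : z'\text{ a legal option of }z\}$, and (ii) illegal-descriptor branches produce nimbers strictly dominated by legal-branch nimbers and so do not appear in the top-level $\mex$. I would realize (ii) via \ruleset{Nim}-style absorbing sub-paths pinned to controlled large nimber values, and (i) by pairing each internal bit choice with a nimber-matched shadow branch so that intermediate $\mex$ computations collapse to the intended union over descriptor leaves.

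The proof then concludes by induction on the remaining height of $Z$'s game tree: the terminal nimber-$0$ gadget provides the base case, and the inductive step follows from the nimber-preservation invariant of the ply gadget applied to each legal successor. Polynomial-time constructibility is immediate from the polynomial size of each gadget and the polynomial depth of the composition. The hardest step will be exhibiting a ply gadget that provably satisfies the nimber-preservation invariant despite its internal binary structure, since this amounts to realizing a potentially exponentially-wide $\mex$ over legal options by a polynomial-size binary sub-structure computing the same $\mex$ value, a demand that standard \cclass{PSPACE}-hardness gadgets are not designed to meet.
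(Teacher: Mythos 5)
Your proposal has a genuine gap at exactly the step you flag as hardest: the ply gadget does not exist as described, and the mechanism you propose for building it is circular. Decomposing a wide choice into a sequence of binary Geography choices does not commute with $\mex$: if an internal vertex has two children whose subtrees have leaf-nimber sets $\{0,1\}$ and $\{2,3\}$, the two children get nimbers $2$ and $0$, so the internal vertex gets $\mex\{2,0\}=1$, whereas the intended value is $\mex\{0,1,2,3\}=4$. Your fix---``pairing each internal bit choice with a nimber-matched shadow branch''---requires knowing the nimbers of the subtrees you are shadowing, i.e., the nimbers of the successor positions $z'$; but computing those is exactly the \cclass{PSPACE}-hard quantity the reduction is not allowed to evaluate. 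The same circularity infects the inductive invariant: the induction on remaining height needs each entry vertex to have nimber $\nimber(z)$, which presupposes per-node nimber control that the polynomial-time construction cannot certify. (The subsidiary point (ii) is salvageable---an extra option with nimber strictly larger than the eventual $\mex$ indeed does not change the $\mex$, so illegal branches can be pinned high---but that does not rescue (i).)

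The paper avoids direct simulation entirely. It first observes that $\nimber(Z)\le g=\min(h,l)$, which is polynomial for $Z\in\mathbb{I}^P$, then poses the $g+1$ decision problems $Q_i$: ``Is $\nimber(Z)=\ast i$?''. Each $Q_i$ is in \cclass{PSPACE}, so each reduces via QSAT to a \ruleset{Generalized Geography} instance using only the classical \emph{winnability}-preserving reduction; a two-vertex tail normalizes each answer to a subposition of value $0$ or $\ast$. A separate selector gadget (the $t_j$, $c_i$, $d_i$, and $start$ vertices) then combines these $0$/$\ast$ indicator subgames with explicit small-nimber utility vertices so that the $\mex$ computations at $start$ decode \emph{which} $Q_i$ answered yes into the value $\ast k$. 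In other words, the exact-nimber difficulty is pushed into polynomially many yes/no oracles where standard reductions already suffice, and only the small, explicitly constructed combining gadget needs exact nimber analysis. If you want to complete a proof, this indirection (or something equally non-simulational) is the missing idea.
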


In other words, for example,  given any \ruleset{Node Kayles} or \ruleset{Avoid True} game, we can, in polynomial time, generate
a \ruleset{Generalized Geography} game with the same Grundy value,
despite the fact that  the Grundy value of the
input game could be intractable to compute.

Because nimber-preserving reductions generalize winnability-preserving reductions,
every Sprague-Grundy complete ruleset for ${\cal I}^P$ must be
\cclass{PSPACE}-complete to solve.
However, for a simple mathematical reason, we have the following observation:
\begin{proposition}
\label{Prop:NotForEveryone}
Not every \cclass{PSPACE}-complete ruleset in ${\cal I}^P$ is Sprague-Grundy complete for ${\cal I}^P$.
\end{proposition}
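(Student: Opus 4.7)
The plan is to exhibit a single \cclass{PSPACE}-complete ruleset in ${\cal I}^P$ whose achievable nimbers form a bounded set, and then invoke the presence of larger nimbers in ${\cal I}^P$ itself to preclude Sprague-Grundy completeness. The construction uses a degree-restricted version of \rsGeog.

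The candidate ruleset $R^{\star}$ I would choose is the restriction of \rsGeog to directed graphs whose vertices each have out-degree at most three. The Lichtenstein--Sipser \cclass{PSPACE}-completeness reduction for \rsGeog produces planar bipartite digraphs of total degree at most three, so the winnability problem for $R^{\star}$ remains \cclass{PSPACE}-complete. Since each move deletes the current vertex, the game tree has height at most the number of vertices, which places $R^{\star}$ in ${\cal I}^P$.

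I would bound the nimbers of $R^{\star}$-positions by a structural induction on game tree height. For the base case, terminal positions have nimber $0$. For the inductive step, a position $P$ of $R^{\star}$ has at most three options, and by the inductive hypothesis each option has nimber in $\{0,1,2,3\}$; hence the set $S$ of option nimbers is a subset of $\{0,1,2,3\}$ of size at most three, so $\mex(S)\in\{0,1,2,3\}$. Therefore every $R^{\star}$-position has nimber at most $3$.

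To finish, I would note that ${\cal I}^P$ contains positions of nimber strictly greater than $3$: for instance, a single-pile \ruleset{Nim} position with four stones (in unary) lies in ${\cal I}^P$ and has nimber $4$. Any nimber-preserving reduction from ${\cal I}^P$ into $R^{\star}$ would, applied to this position, have to output an $R^{\star}$-position of nimber $4$, which does not exist at all, let alone in polynomial time. Hence $R^{\star}$ is \cclass{PSPACE}-complete and lies in ${\cal I}^P$ yet fails to be Sprague-Grundy complete for ${\cal I}^P$. The only delicate step is verifying that the out-degree-three restriction preserves \cclass{PSPACE}-completeness; I would handle this by direct appeal to Lichtenstein--Sipser.
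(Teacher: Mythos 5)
Your proof is correct and takes essentially the same approach as the paper: the paper's Section \ref{sub:SGC} uses exactly this example of \ruleset{Generalized Geography} on degree-three graphs (via Lichtenstein--Sipser), whose achievable nimbers are bounded by three, to rule out any nimber-preserving reduction from higher-nimber positions of ${\cal I}^P$. The paper's headline witness for the proposition is \ruleset{Atropos}, shown to have nimbers bounded by $\ast 7$, but the underlying argument---a \cclass{PSPACE}-complete ruleset with a bounded nimber range cannot be Sprague-Grundy complete---is identical to yours.
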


In particular, 
\ruleset{Atropos} 
 \cite{DBLP:journals/im/BurkeT08}
is \cclass{PSPACE}-complete
but not Sprague-Grundy-complete for ${\cal I}^P$.
Thus, together Theorem \ref{theo:GGComplete} and Proposition
\ref{Prop:NotForEveryone}
highlight the fundamental difference between
winnability-preserving reductions and nimber-preserving reductions.
Our result further illuminates the central role of {\em Generalized Geography}---a classical \cclass{PSPACE}-complete ruleset instrumental to
Lichtenstein-Sipser's
\cclass{PSPACE}-hard characterization of \ruleset{Go}
\cite{DBLP:journals/jct/FraenkelL81}---in the complexity-theoretical understanding of combinatorial games.

In Section \ref{Sec:Final}, when discussing related questions, we also demonstrate the brief corollary:

\begin{corollary}
\ruleset{Edge Geography}
(formulated in \cite{DBLP:journals/jcss/Schaefer78} and studied in \cite{DBLP:journals/tcs/FraenkelSU93},
see Section \ref{sub:SGC} for the ruleset)
is also Sprague-Grundy-complete for $\cal{I}^P$.
\end{corollary}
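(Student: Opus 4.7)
The plan is to compose Theorem~\ref{theo:GGComplete} with a polynomial-time nimber-preserving reduction from \ruleset{Generalized Geography} to \ruleset{Edge Geography}. The key observation I would exploit is that on any directed acyclic graph, the two rulesets induce the same game tree: a play in a DAG visits a simple path, so neither a vertex nor an edge can ever be revisited, and the vertex-deletion and edge-deletion semantics are operationally identical. In particular, from any position on a DAG, the two interpretations yield the same options, and hence by induction on game-tree depth the same Grundy value.

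Given this equivalence, the task reduces to arranging that the \ruleset{Generalized Geography} instance output by the reduction of Theorem~\ref{theo:GGComplete} is acyclic. Since the input ruleset lies in ${\cal I}^P$, its game tree has polynomial height, and the natural encoding---one layer of vertices per simulated ply---yields a DAG by construction. I would verify this directly against the construction used in Theorem~\ref{theo:GGComplete}; when it holds, the identical graph, reinterpreted verbatim as an \ruleset{Edge Geography} position, is itself a polynomial-time nimber-preserving reduction, and the corollary follows in one line.

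If the construction of Theorem~\ref{theo:GGComplete} does not deliver a DAG outright, the remedy is to re-engineer it in leveled form: enforce that every edge of the \ruleset{Generalized Geography} graph points from a vertex at depth $i$ to one at depth $i+1$, with polynomially many depths bounding the simulated game tree's height. Polynomial-time shortness of the input ruleset guarantees that such a leveled encoding stays polynomial in size, and by acyclicity its vertex-deletion and edge-deletion game trees coincide. The main technical hurdle is exactly this step---confirming, or re-deriving, the DAG structure of the Theorem~\ref{theo:GGComplete} output---since a naive time-stamped unfolding of a cyclic graph would incorrectly allow a single original vertex to be visited at multiple layers, inflating the game tree. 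Once the acyclicity is in hand, the DAG-coincidence observation closes the proof immediately.
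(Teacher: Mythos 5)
Your central observation---that on a directed acyclic graph the vertex-deletion and edge-deletion semantics coincide, because acyclicity prevents the token from ever re-encountering a deleted vertex or a deleted outgoing edge---is correct, and the paper does exploit exactly this coincidence. But it applies it only to the outer combining gadget (the $t_j$, $c_i$, $d_i$, $start$ structure of Figure \ref{fig:GeoRed}), which is indeed acyclic. The gap is in your plan to make the \emph{entire} output of Theorem \ref{theo:GGComplete} a DAG. That cannot work: on a DAG, the value of a \ruleset{Geography} position (of either flavor) depends only on the current token vertex, so the nimber of every vertex of a polynomial-size instance can be computed in polynomial time by taking $\mex$ over out-neighbors in reverse topological order. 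A polynomial-time nimber-preserving reduction from ${\cal I}^P$ to DAG instances would therefore make nimber computation for all of ${\cal I}^P$ polynomial-time, contradicting Proposition \ref{PCNimber} unless \cclass{P} $=$ \cclass{PSPACE}. This is why your ``verify the construction is a DAG'' step must fail---the inner $f(Q_i)$ gadgets produced by the classic QSAT reduction necessarily contain directed cycles (the edges from the clause/literal vertices back into the variable chain are what make the game hard)---and why your leveled-unfolding fallback cannot be repaired: any faithful acyclic unfolding must have size comparable to the exponentially large game tree, as you yourself begin to suspect at the end.

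The repair, which is what the paper does, is to leave the hardness gadgets cyclic but build them natively in \ruleset{Edge Geography}: each decision problem $Q_i$ (``is $\nimber(G) = \ast i$?'') is in \cclass{PSPACE}, hence reduces to QSAT and then to \ruleset{Edge Geography} by the classic reduction for that ruleset, after which the two prepended filler vertices normalize the value to $0$ or $\ast$ exactly as in Section \ref{Sec:SPComplete}. Only the outer gluing gadget needs the DAG-coincidence argument (plus a small adjustment for primality). So your key lemma is sound, but it must be applied locally to the acyclic combining structure, not globally to the whole instance.
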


\subsubsection{Game Secrets: Homomorphic Sprague-Grundy Theorem}

In the framework of disjunctive sums, every impartial game $G$ encodes a secret, i.e., its Grundy value $\nimber(G)$, which succinctly summarizes $G$'s game tree and can be represented by a single-pile \ruleset{Nim}.
Once this secret is obtained, by Sprague-Grundy theory,
 one can replace $G$ by its equivalent single-pile \ruleset{Nim}
 in any disjunctive sum involving $G$.
Even though \ruleset{Nim} is expressive enough to provide natural game representations of these game secrets, it does not admit an efficient reduction, even for polynomial-time solvable games, such as \ruleset{Undirected Geography},  in $\cal{I}^P$.

In contrast, for impartial games in $\cal{I}^P$, Theorem \ref{theo:GGComplete} shows that \ruleset{Generalized Geography} provides natural game representations of these game secrets.
In conjunction with Sprague-Grundy theory,
  this \ruleset{Generalized Geography}-based encoding of
  nimbers leads to a surprising cryptography-inspired homomorphic characterization of impartial games.

\begin{theorem}[Homomorphic Sprague-Grundy-Bouton Theorem]
\label{theo:Homomorphic}
$\cal{I}^P$
enjoys the following two contrasting properties:
\begin{itemize}
    \item {\bf Hard-Core Nimber Secret}: The problem of computing the nimber---i.e., finding $\nimber(G)$  given a position $G$ of $\cal{I}^P$---is \cclass{PSPACE}-complete.
 \item {\bf Homomorphic Game Encoding}:
  For any pair of positions $G_1$ and $G_2$ of $\cal{I}^P$,
      one can, in polynomial-time (in the sizes
      of $G_1$ and $G_2$), construct a
      \ruleset{Generalized Geography} game $G$, such that:
    $$\nimber(G) = \nimber(G_1) \oplus \nimber(G_2).$$
\end{itemize}
\end{theorem}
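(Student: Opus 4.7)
The plan is to derive Theorem~\ref{theo:Homomorphic} as essentially a corollary of Theorem~\ref{theo:GGComplete} combined with the Sprague-Grundy-Bouton identity~(\ref{SGE}). The first bullet follows from standard complexity arguments, while the second bullet is obtained by feeding the disjunctive sum $G_1 + G_2$ into the polynomial-time nimber-preserving reduction of Theorem~\ref{theo:GGComplete}.

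For the \textbf{Hard-Core Nimber Secret}, the \cclass{PSPACE} upper bound follows from the defining recursion~(\ref{Eqn:nimber}): since every position $G$ of ${\cal I}^P$ has polynomial-height game tree and polynomial-time-verifiable option generation, a depth-first traversal evaluates $\nimber(G)$ in polynomial space (each recursive frame stores only the current position, the list of children's nimbers accumulated so far, and a counter). For \cclass{PSPACE}-hardness, I would invoke any of the \cclass{PSPACE}-complete winnability rulesets already cited in the paper, such as \ruleset{Generalized Geography} or \ruleset{Node Kayles}: since a position is losing iff its nimber is $0$, any polynomial-time oracle for nimber computation on ${\cal I}^P$ immediately decides winnability on those rulesets, yielding \cclass{PSPACE}-hardness.

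For the \textbf{Homomorphic Game Encoding}, given positions $G_1$ and $G_2$ of (possibly different) rulesets $R_1,R_2$ in ${\cal I}^P$, the procedure has two steps. First, form the disjunctive sum position $G_1 + G_2$. Any play in $G_1 + G_2$ interleaves plays in the two summands, so its game tree has height at most the sum of the heights of $G_1$ and $G_2$, which remains polynomial in the input size; moves and terminality are polynomial-time verifiable from those of $R_1$ and $R_2$. Therefore $G_1 + G_2$ is itself a position of a polynomially-short impartial ruleset, hence lies in ${\cal I}^P$. Second, invoke Theorem~\ref{theo:GGComplete} on $G_1 + G_2$: in polynomial time it returns a \ruleset{Generalized Geography} game $G$ with $\nimber(G)=\nimber(G_1 + G_2)$. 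By~(\ref{SGE}), $\nimber(G_1+G_2) = \nimber(G_1)\oplus\nimber(G_2)$, giving the required output.

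The main subtlety, to the extent there is one, is the conceptual point that Theorem~\ref{theo:GGComplete}'s reduction must accept disjunctive sums of pairs of ${\cal I}^P$ rulesets as legitimate inputs. Once one formalizes the disjunctive sum operator as producing a ruleset in ${\cal I}^P$---which is immediate from the closure of polynomial game-tree height, polynomial-time option enumeration, and polynomial-size encoding under the sum---the reduction applies uniformly. Strikingly, the two bullets together say that although extracting the secret $\nimber(G)$ is \cclass{PSPACE}-hard, combining two such secrets into a single \ruleset{Generalized Geography} position encoding their nim-sum can be done in polynomial time without ever learning either secret, mirroring the spirit of homomorphic encryption.
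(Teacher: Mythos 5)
Your proposal is correct and follows essentially the same route as the paper: the first bullet is the paper's Proposition~\ref{PCNimber} (DFS evaluation for the \cclass{PSPACE} upper bound, plus a Cook--Turing reduction from winnability of a \cclass{PSPACE}-complete ruleset for hardness), and the second bullet is exactly the paper's argument for Theorem~\ref{IPH} --- form $G_1+G_2$, observe that ${\cal I}^P$ is closed under disjunctive sum, apply Theorem~\ref{theo:GGComplete}, and conclude via the Sprague--Grundy--Bouton identity~(\ref{SGE}). No gaps.
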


Like the Sprague-Grundy theory---which represents
  the game values by natural games---Theorem \ref{theo:Homomorphic} encodes ``nimber secrets''
with natural games.
The former uses ``single-pile'' \ruleset{Nim} and the later uses ``single-graph'' \ruleset{Generalized Geography}.
For both, one can compute, in polynomial time, the representation
 of the disjunctive sum of any two representations.
Because \ruleset{Generalized Geography} is not naturally closed
  under disjunctive sums,
  genuine computational effort---although feasible in
  polynomial-time---is required in the homomorphic encoding
  of the disjunctive sum.

\newcommand{\textScale}{1.5}

\begin{figure}[h!]
\begin{center}
\scalebox{.8}{
\begin{tikzpicture}
    \node[scale=1.8] (ApB)  {$A+B$};
    \node[] at (5,0)(twoGames)  {Two Equivalent Games};
    \node[] at (13, 0) (oneGame)  {One Equivalent Game};
    \node[rectangle, align=center] (twoNims) [above=of twoGames] {\phantom{x} \\ \scalebox{\textScale}{\ruleset{Nim}$(a)\ +$ \ruleset{Nim}$(b)$} \\ where \ruleset{Nim}$(a) = A$\\and \ruleset{Nim}$(b) = B$};
    \node[rectangle, align=center, scale=\textScale] (oneNim) [above=of oneGame] {\ruleset{Nim}$(a \oplus b)$};
    \node[rectangle, scale=\textScale] (oneGeography) [below=of oneGame] {\ruleset{Geography}$(G_C)$};


    \path[->]
        (ApB) edge [bend left] node [above, sloped, align=center] {Sprague \& Grundy \\(\cclass{PSPACE}-hard)} (twoNims)
        (twoNims) edge node [above, sloped, align=center] {Bouton \\ (in \cclass{P})} (oneNim)
        (ApB) edge [bend right=10] node [below, sloped, align=center] {Section \ref{Sec:SPComplete}\\(in \cclass{P})} (oneGeography)
    ;

\end{tikzpicture}}\end{center}
\caption{Two transformations of impartial games $A$ and $B$ into a single game equivalent to their disjunctive sum.}
\label{fig:twoTransformations}
\end{figure}
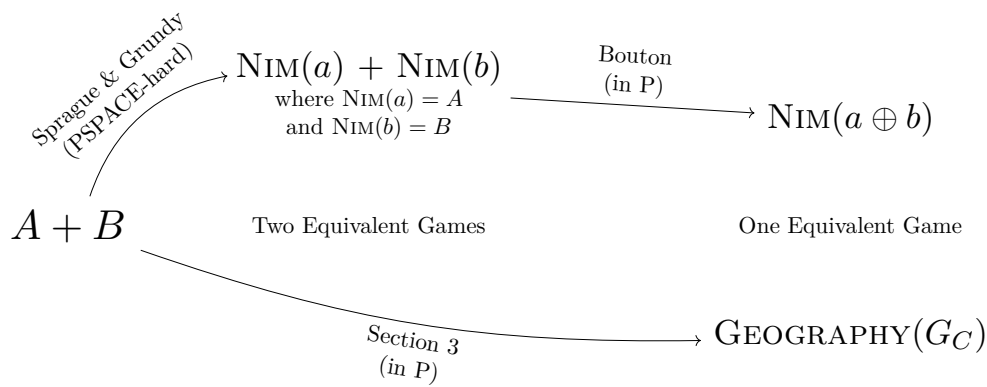

\subsection{Remarks}

This genuineness can be partially captured by Conway's notation of {\em prime impartial games} in his studies of misere games.
Let prime impartial games
be ones that can't be expressed as the disjunctive sum of any two other games (see
Section \ref{Sec:PrimeGames} for the formal definition).
Like the role of prime numbers
in the multiplicative group over integers, these games are the basic building blocks in the disjunctive-sum-based monoid over impartial games.

In Theorem \ref{theo:Prime}, we will prove that each \ruleset{Generalized Geography} game created in the proof of
Theorem \ref{theo:Homomorphic} is indeed a prime game.
Thus, in Theorem \ref{theo:Homomorphic},
  because $G$ is a natural prime game, the algorithm for ``Homomorphic Game Encoding'' cannot trivially output the syntactic description of $G_1+G_2$.
Thus, it must use a more ``elementary''
position to encode $\nimber(G_1)\oplus\nimber(G_2)$.
In other words, the bitwise-xor of
 the \cclass{PSPACE}-hard ``nimber secrets'' encoded in any two impartial games of $\cal{I}^P$
 can be efficiently re-encoded
 by a natural prime game of $\cal{I}^P$,
 whose game tree is
 not isomorphic to that of $G_1+G_2$.

Conway's notation of prime games only partially captures the genuineness of the homomorphic encoding in Theorem \ref{theo:Homomorphic} because one may locally modify the game rule for some zero positions to make the sum game prime without changing the nimber. So the naturalness of \ruleset{Generalized Geography} captures more beyond the current concept of prime games.
We continue to look for a more accurate characterization.


Note also that the
characterization presented
in Theorem
\ref{theo:Homomorphic} is cryptography-inspired
rather than cryptographically-applicable.
In 
 {\em partially homomorphic encryption},
the encoding functions, such as RSA encryption and discrete-log, must be one-way functions.
Here, we consider a game position as a ``natural encoding'' of its nimber-secret:
The focus of Theorem
\ref{theo:Homomorphic} is on
the complexity-based homomorphic property of this encoding  (and  that it arises naturally from impartial games) rather than the construction of
homomorphic ``one-way'' game-based encryption of secret messages.  (See Section, \ref{Sec:Final}---Conclusion and Open Questions---for more discussion on this.)
Thus, in contrast to  (partially) homomorphic cryptographic functions---such as discrete-log and RSA---whose secret messages can be recovered in \cclass{NP} (due to a one-way encoding of secrets), ``decoding'' the nimber-secrets---as they are
naturally encoded in combinatorial games of  $\cal{I}^P$---is \cclass{PSPACE}-complete in the worst case.


\section{Impartial Games and Their Trees: Notation and Definitions}

In this section, we review some background concepts and notation.
In the paper, we  use $\cal I$ to denote the family of all impartial rulesets; we use $\mathbb{I}$ to denote the family of all impartial games---positions---defined by rulesets in $\cal I$;
we use $\mathbb{I}^P$ to denote the family of all polynomially-short impartial games, i.e., positions defined by rulesets in ${\cal I}^P$.

Each impartial ruleset $R \in {\cal I}$
has two mathematical components
$({\cal B}_R,\rho_R)$, where
${\cal B}_R$ represents the set of all possible game positions in $R$ and
$\rho_R: {\cal B}_R \rightarrow
2^{{\cal B}_R}$
defines the options for each position in $R$.
For each position $G \in {\cal B}_R$,
  the ruleset $R$ defines a natural {\em game tree}, $T_G$, rooted at $G$. 
$T_G$ recursively branches with feasible options.
The root is associated with position $G$ itself. The number of children that the root has is equal to the  number of options, i.e., $|\rho_R(G)|$, at $G$. Each child is associated with a position from $\rho_R(G)$ and its sub-game-tree is defined recursively.
Thus,  $T_G$ contains all
  {\em reachable positions} of $G$ under ruleset $R$.
The leaves of $T_G$
are terminal positions under the normal-play setting.
Up to isomorphism, the game tree for an impartial game
is unique. 

\begin{definition}[Game isomorphism]
Two impartial games, $G$ and $H$, are {\em isomorphic} to each other if  $T_G$ is isomorphic to  $T_H$.
\end{definition}

For any two impartial games $F$ and $G$, the game tree $T_{(F+G)}$ of their disjunctive sum can be more naturally
characterized via $T_F\square T_G$,
the {\em Cartesian  product}
of $T_F$ and $T_G$.
Clearly, $T_F\square T_G$ is a directed acyclic graph (DAG) rather than a rooted-tree,
 as a node may have up to two parents.
 The game tree $T_{(F+G)}$ can be viewed as a tree-expansion of
 DAG $T_F\square T_G$.
 To turn it into a tree, one may simply duplicate all subtrees whose root has two parents, and give the parents edges to different roots of those two subtrees.
We will use $T_F \blacksquare T_G$ to denote this tree expansion of the Cartesian  product $T_F \square T_G$, and call it the {\em tree sum} of $T_F$ and $T_G$.


In combinatorial game theory (CGT),
 each ruleset usually represents an infinite family of
  games, each defined by its starting position.
For algorithmic and complexity
 analyses, a {\em size} is associated with each
game position as the basis for measuring complexity  \cite{DBLP:journals/jcss/Schaefer78,fraenkel2004complexity,PapadimitriouBook:1994,BurkeFerlandTengQCGT}.
Examples include:
(1) the number of vertices in the graph for \ruleset{Node Kayles}
 and \ruleset{Generalizad Geography},
(2) the board length of
 \ruleset{Hex} and \ruleset{Atropos},
 and
(3) the number of bits encoding \ruleset{Nim}.

The size measure is assumed to be {\em natural} \footnote{In other words, the naturalness assumption
rules out rulesets with embedded hard-to-compute predicate like---as a slightly dramatized illustration---``If \cclass{P} $\neq$ \cclass{PSPACE} is true, then the feasible options of a position include a special position.''} with respect to the key components of the ruleset.
In particular, for each position
$G$ in a ruleset $R$:
\begin{itemize}
\item
$G$ has a binary-string representation of length polynomial in $\size(G)$.
\item Each position reachable from $G$ has size upper-bounded by a polynomial function in $\size(G)$.


\item Determining if a position of $F \in {\cal B}_R$  is an option of $G$---i.e., whether $F \in  \rho_{R}(G)$---takes time polynomial in $size(G)$.
\end{itemize}

Recall that the family, ${\cal I}^P$, discussed in the introduction is formulated based on the sizes of game positions.

\begin{definition}[Polynomially-Short Games]
A combinatorial ruleset $R = ({\cal B}_R, \rho_R)$ is {\em polynomially short} if the height of the game tree $T_G$ of each position $G \in {\cal B}_R$ is polynomial in  $\size(G)$.
Furthermore, we say $R\in {\cal I}^P$ is {\em polynomially-wide}
if for each position $G\in {\cal B}_R$, the number of options
$|\rho_R(G)|$ is bounded by a polynomial function in $\size(G)$.
\end{definition}
We call games of ${\cal I}^P$ {\em polynomially-short} games.
\ruleset{Generalized Geography} and
\ruleset{Node Kayles} are among the many examples of games that are both polynomially-wide and polynomially-short.
\ruleset{Nim}, however, is neither
polynomially-wide nor polynomially-short due to the binary encoding of the piles.
In general, under the aforementioned assumption regarding the size function of game positions, $|\rho_R(G)|$ could be exponential in $\size(G)$.
However, positions in $\rho_R(G)$ can be enumerated in  polynomial space.
Therefore, by DFS evaluation of game trees and classical
complexity analyses of
\ruleset{Node Kayles}, \ruleset{Generalized Geography} and \ruleset{Avoid True}:

\begin{proposition}[\cclass{PSPACE}-Completeness]
\label{PCNimber}
For any polynomially-short impartial ruleset $R$ and a position $G$ in $R$,
$\nimber(G)$ can be computed in
space polynomial in $\size(G)$.
Furthermore, under Cook-Turing reductions, nimber computation for \textbf{some} games in
${\cal I}^P$ is \cclass{PSPACE}-hard.
\end{proposition}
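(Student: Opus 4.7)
The proposition has two parts that I would handle separately: a \cclass{PSPACE} upper bound for nimber computation on all polynomially-short rulesets, and a matching hardness lower bound for at least one ruleset in ${\cal I}^P$.

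For the upper bound, the plan is a standard depth-first recursive evaluation of the game tree using equation (\ref{Eqn:nimber}). The key preliminary observation I would establish first is that the nimber of a position is bounded by the height of its game tree: by induction on height $h$, a position at height $h$ has children at height at most $h-1$ with nimbers in $\{0,1,\ldots,h-1\}$, so its $\mex$ is at most $h$. Since $R$ is polynomially-short, this means $\nimber(G)$ fits in $O(\log \size(G))$ bits (and, more usefully, a set of sibling nimbers can be stored in polynomial space). The algorithm then does the following at each recursive call on a position $G'$: iterate through $\rho_R(G')$ — even if this set has exponential cardinality, the naturalness assumptions guarantee its elements can be enumerated in polynomial space — recursively compute each child's nimber, record it in a polynomially-sized bitmap of seen values, and finally return the $\mex$. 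Because the recursion depth equals the height of $T_{G}$ (polynomial by assumption) and each stack frame holds only a position, an enumeration state, and the bitmap of children's nimbers (all polynomial in $\size(G)$), the total space is polynomial.

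For the hardness claim, I would reduce from the winnability problem for one of the classical \cclass{PSPACE}-complete polynomially-short rulesets such as \ruleset{Generalized Geography}, \ruleset{Node Kayles}, or \ruleset{Avoid True}, whose \cclass{PSPACE}-completeness is already cited earlier in the excerpt. Under the normal-play convention, a position is a previous-player win if and only if its nimber equals $0$. Thus an oracle that returns $\nimber(G)$ solves winnability by a single query and a zero-test, which is a trivial polynomial-time Cook-Turing reduction from winnability to nimber computation. Since winnability of \ruleset{Generalized Geography} is already \cclass{PSPACE}-hard and \ruleset{Generalized Geography} lies in ${\cal I}^P$, nimber computation for this ruleset is \cclass{PSPACE}-hard under Cook-Turing reductions, which suffices.

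The main technical subtlety — and the only place the proof is more than bookkeeping — will be the upper bound in the presence of potentially exponentially many options at a single position. Without the polynomial bound on nimbers, even storing the set of children's nimbers could blow up; and without the naturalness assumption on $\rho_R$, we could not enumerate those children in polynomial space. I would therefore state the height-bounds-nimber lemma explicitly before invoking it, and then point out that combining polynomial tree height, polynomial-space enumerability of $\rho_R(G')$, and the polynomial bit-length of nimbers is precisely what makes the recursion fit into \cclass{PSPACE}. No further new ingredients should be needed.
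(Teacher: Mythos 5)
Your proposal is correct and follows essentially the same route the paper takes: the paper justifies the upper bound by DFS evaluation of the game tree (using polynomial tree height and polynomial-space enumerability of options) and the hardness by appeal to the classical \cclass{PSPACE}-completeness of winnability for rulesets like \ruleset{Generalized Geography}, \ruleset{Node Kayles}, and \ruleset{Avoid True}, combined with the fact that a position is a loss exactly when its nimber is zero. Your write-up simply makes explicit the height-bounds-nimber observation (which the paper also records separately as a proposition in Section 3) and the bookkeeping that the paper leaves as a sketch.
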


An impartial ruleset is said to be {\em polynomial-time solvable}---or simply, {\em tractable}---if there is a polynomial-time algorithm to identify a winning option
whenever there exists one (i.e., for the search problem associated with the decision of winnability). If one doesn't exist, then the algorithm needs to only identify this.







\section{Star Atlas: A Complete Generalized Geography}
\label{Sec:SPComplete}

In our analysis, we will use the  standard CGT notation for nimbers: $*k$ for $k$, except that $*$ is shorthand for $*1$ and 0 is shorthand for $\ast 0$.\footnote{The reason for the $\ast 0 = 0$ convention is that it is equivalent to the integer zero in CGT.}
Mathematically, one can view $*$ as a map  from $\mathbb{Z}^+\cup \{0\}$ to (infinite) subfamilies of impartial games in $\mathbb{I}$, such that for each $k \in \mathbb{Z}^+\cup \{0\}$, $\nimber(G) = k, \forall G = *k$.
In other words, $*$ is {\em nature’s game encoding} of non-negative integers.

Sprague-Grundy Theory establishes that each impartial game's strategic
relevance in disjunctive sums is determined by
its nimber (i.e., its  star value). In this section, we prove Theorem \ref{theo:GGComplete}, showing how to use \ruleset{Generalized Geography} to efficiently ``map out'' games' nimbers across~$\mathbb{I}^P$.

For readability, we restate
Theorem \ref{theo:GGComplete}
to make the needed technical component explicit:

\begin{theorem}[Sprague-Grundy-Completeness of \ruleset{Generalized Geography}]
\label{Theo:Restate1.2}
There exists a polynomial-time algorithm $\phi$ such that for any game $G\in \mathbb{I}^P$,
$\phi(G)$ is a  \ruleset{Generalized Geography}
position satisfying
$\nimber(\phi(G)) = \nimber(G)$.
\end{theorem}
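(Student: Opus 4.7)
The plan is to exploit the polynomial bound on nimbers in $\mathbb{I}^P$: since $G$'s game tree has polynomial height $h$, a one-line induction on $h$ shows $\nimber(G) \le h$, because the $\mex$ of any subset of $\{0,\dots,h-1\}$ lies in $\{0,\dots,h\}$. Thus I only need to produce Geography graphs whose nimbers live in $\{0,1,\dots,h\}$, which is consistent with the basic fact that on a polynomial-size Geography graph every vertex's nimber is bounded by its out-degree. First I would pre-compute a ``star atlas'' of small Generalized Geography gadgets $\mathcal{S}_0,\mathcal{S}_1,\dots,\mathcal{S}_h$ with $\nimber(\mathcal{S}_i)=i$, built from stars or short chains; these serve as the atomic nimber-carriers used throughout the rest of the construction.

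Second, I would define $\phi$ by recursion on the reachable positions of $G$: for each reachable $P$, build a Geography subgraph $\phi(P)$ rooted at a vertex $v_P$ satisfying $\nimber(\phi(P))=\nimber(P)$. Terminal positions map to terminal Geography vertices. The inductive skeleton gives $v_P$ one Geography edge per option $P_i$ of $P$, each leading into the recursive $\phi(P_i)$; the inductive hypothesis then yields $\nimber(v_P)=\mex\{\nimber(P_i)\}=\nimber(P)$, so nimbers propagate correctly. Memoizing by position identity keeps the assembly a DAG, and this direct build-out is already polynomial-size whenever the game-DAG of distinct reachable positions of $G$ is polynomial.

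The main obstacle, and the part I expect to require the most care, is handling the fully general $\mathbb{I}^P$ case in which both the branching factor at a single position and the total number of distinct reachable positions can be exponential in $\size(G)$. To stay polynomial-size, I would replace the direct ``one edge per option'' rule at $v_P$ by a polynomial-size move-selector gadget: the current player picks the next option bit-by-bit along a polynomial-depth subgraph, while a verifier subgraph --- compilable in polynomial time by the naturalness hypothesis that option-membership is polynomial-time decidable --- routes each valid bit string into the recursive $\phi(P_i)$ and each invalid one into a star-atlas stub $\mathcal{S}_j$ of an appropriately chosen index. The crux is that a binary chooser does not commute with $\mex$, so the star-atlas fillers must be selected so that the multiset of nimbers reachable one Geography step away from $v_P$ has exactly the same $\mex$ as $\{\nimber(P_i) : P_i \text{ an option of } P\}$; this forces a careful analysis of how intermediate bit-choice vertices accumulate and cancel nimber contributions, and is where I expect the bulk of the technical work to concentrate. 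Once the gadgets are calibrated, correctness follows by induction on game-tree height, and the polynomial size and runtime bounds follow from multiplying the polynomial per-level gadget size by the polynomial height $h$.
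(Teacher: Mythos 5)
There is a genuine gap, and it sits exactly where you predicted ``the bulk of the technical work'' would be: the move-selector gadget for positions with exponentially many options (or exponentially many distinct reachable positions, as in \ruleset{Node Kayles}) is not just delicate---it cannot be calibrated by the method you describe. Your plan is to route invalid bit-strings to star-atlas stubs $\mathcal{S}_j$ ``chosen so that the multiset of nimbers reachable from $v_P$ has the same $\mex$ as $\{\nimber(P_i)\}$.'' But choosing those indices $j$ requires knowing which nimbers actually occur among the options $P_i$, and that is precisely the \cclass{PSPACE}-hard information the reduction is not allowed to compute. The recursion gives you subgraphs $\phi(P_i)$ with the \emph{correct but unknown} values, and $\mex$ does not factor through a polynomial-depth chooser: each intermediate bit-choice vertex takes a $\mex$ of its children, so the root of the selector computes an iterated $\mex$ of $\mex$'s whose outcome depends on the unknown values $\nimber(\phi(P_i))$ in a way that no value-oblivious assignment of fillers can correct (already for two levels, the root value is not a function of the multiset of leaf values alone once the leaves are partitioned among intermediate vertices). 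So the construction is only established for games whose game-DAG of distinct reachable positions is polynomial, which falls well short of $\mathbb{I}^P$.

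The paper escapes this trap by \emph{not} simulating the $\mex$ recursion at all. It first bounds $\nimber(G)\le g=\min(h,l)$ (your first paragraph matches this), then treats each question $Q_i=$ ``Is $G=\ast i$?'' as a \cclass{PSPACE} decision problem, runs it through QSAT and the classic Schaefer/Lichtenstein--Sipser winnability reduction to get a \ruleset{Geography} subgraph whose value is $0$ or $\ast$ according to the answer, and finally wires these $g+1$ indicator subgames together with explicit $c_i$/$d_i$ combining vertices (plus a small star atlas $t_0,\dots,t_{g-2}$, the one ingredient you and the paper share) so that the start vertex has value exactly $\ast k$. In other words, the intractable part is delegated to an existing winnability-preserving reduction used as a black box on $O(g)$ yes/no questions, and only the final, fully explicit gadget has to manage $\mex$ values. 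If you want to salvage your approach, replacing the bit-by-bit chooser with this ``decision-problem indicator'' idea is the missing ingredient.
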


Our proof starts with the following basic property of
nimbers,
which follows from the recursive
definition
(given in Equation \ref{Eqn:nimber}):

\begin{proposition}
For any impartial game $G$,
 $\nimber(G)$ is bounded above by both the height of its game tree, $h$, and the number of options at $G$, $l$. In other words,
$G = \ast k \text{, where } k \leq \min(h, l).$
\end{proposition}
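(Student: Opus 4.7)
The plan is to establish the two bounds separately, each by a short argument directly from the recursive definition of $\nimber$ in Equation~(\ref{Eqn:nimber}). Both rest on the elementary fact about $\mex$: for any finite set $S \subseteq \mathbb{Z}^+ \cup \{0\}$, $\mex(S) \leq |S|$, since $\{0,1,\dots,|S|\}$ has $|S|+1$ elements and therefore cannot be contained in $S$.

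First I would dispatch the bound $\nimber(G) \leq l$. Let $\{G_1,\dots,G_l\} = \rho_R(G)$ be the options of $G$. By definition, $\nimber(G) = \mex\!\left(\{\nimber(G_1),\dots,\nimber(G_l)\}\right)$, and the set on the right has cardinality at most $l$. Applying the elementary $\mex$ fact above yields $\nimber(G) \leq l$ immediately.

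Next I would prove $\nimber(G) \leq h$ by induction on the height $h$ of $T_G$. The base case $h=0$ is the terminal case, for which $\nimber(G) = 0$ by clause~(1) of the recursive definition. For the inductive step, observe that every option $G_i$ is the root of a subtree of $T_G$ of height at most $h-1$, so by the induction hypothesis $\nimber(G_i) \leq h-1$. Hence $\{\nimber(G_1),\dots,\nimber(G_l)\} \subseteq \{0,1,\dots,h-1\}$, and in particular the value $h$ is not in this set, so $\mex$ returns a value at most $h$. Combining the two inequalities gives $\nimber(G) \leq \min(h,l)$, which is the claimed bound.

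There is no real obstacle here; both inequalities are immediate consequences of the $\mex$-based definition, and the whole argument is only a few lines. The only thing to be careful about is the convention for the base case (terminal positions have nimber $0$, consistent with $h=0$) and the fact that the relevant property of $\mex$ is that the value it returns on a set of size $n$ is at most $n$, not at most $n-1$.
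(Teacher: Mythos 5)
Your proof is correct and matches the paper's intent: the paper gives no explicit proof, stating only that the proposition ``follows from the recursive definition,'' and your two-part argument (the $\mex$ of a set of size $n$ is at most $n$, plus induction on the height of the game tree) is precisely the standard elaboration of that remark. Both bounds and the base-case convention are handled correctly.
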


To simplify notation, we let $g = \min(h, l)$.  To begin the reduction, we will need a reduction for each decision problem $Q_i=$ ``Does $G = \ast i$?'' ($\forall i \in [g]$).
By Proposition \ref{PCNimber},
 the decision problem $Q_i$ is
 in \cclass{PSPACE}.
Thus, we can reduce each $Q_i$ to an instance in the \cclass{PSPACE}-complete QSAT (Quantified SAT), then to
a \rsGeography\ instance using the classic reduction, $f$, from \cite{DBLP:journals/jcss/Schaefer78,LichtensteinSipser:1980}.  Referring to the starting node of $f(Q_i)$ as $s_i$, we add two additional vertices, $a_i$ and $b_i$, with directed edges $(b_i, a_i)$ and $(a_i, s_i)$.  Now,

\begin{itemize}
    \item $s_i$ has exactly two options, so the value of $f(Q_i)$ is either $0$, $\ast$, or $\ast 2$.  By the reduction, it is $0$ exactly when $G \neq \ast i$, and in $\{\ast, \ast 2\}$ when $G = \ast i$.
    \item $a_i$ has exactly one option ($s_i$), so the value of the \rsGeography\ position starting there (instead of at $s_i$) is $0$ when $G = \ast i$ and $\ast$ otherwise.
    \item $b_i$ has exactly one option ($a_i$), so the value of the \rsGeography\ position starting there is $\ast$ when $G = \ast i$ and $0$ otherwise.
\end{itemize}

Each of these constructions from $Q_i$ is shown in Figure \ref{fig:Qs}.

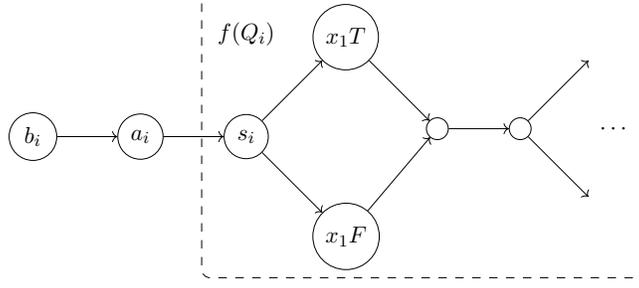
\begin{figure}[h!]
\begin{center}
\scalebox{.8}{
\begin{tikzpicture}
    \node[draw, circle] (bi) {$b_i$};
    \node[draw, circle] (ai) [right=of bi] {$a_i$};
    \node[draw, circle] (si) [right=of ai] {$s_i$};
    \node[draw, circle] (x1T) [above right=of si] {$x_1T$};
    \node[draw, circle] (x1F) [below right=of si] {$x_1F$};
    \node[draw, circle] (x12) [below right=of x1T] {};
    \node[draw, circle] (x12b) [right=of x12] {};
    \node[] (x2T) [above right=of x12b] {};
    \node[] (x2F) [below right=of x12b] {};
    \node[] (x2) [right=of x12b] {$\cdots$};
    \node[] (Q) [above=of si] {$f(Q_i)$};
    \node[fit=(Q)(x1T)(x1F)(x2F)(x2), draw, dash pattern=on 4pt off 4pt, rounded corners] {};


    \path[->]
        (bi) edge (ai)
        (ai) edge (si)
        (si) edge (x1T)
        (si) edge (x1F)
        (x1T) edge (x12)
        (x1F) edge (x12)
        (x12) edge (x12b)
        (x12b) edge (x2T)
        (x12b) edge (x2F)
    ;

\end{tikzpicture}}\end{center}
\caption{Result of the classic QSAT and \rsGeography\ reductions of the question, $Q_i$, ``Does $G = \ast i$?'', with the added vertices $a_i$ and $b_i$.}
\label{fig:Qs}
\end{figure}

We will combine these $g+1$ \rsGeography\ instances into a single instance, but first we need some utility vertices each equal to one of the nimber values $0, \ldots, \ast (g-2)$.  We can build these using a single gadget as shown in figure \ref{fig:stars}.  This gadget consists of vertices $t_0, t_1, \ldots, t_{g-2}$ with edges $(t_i, t_j)$ for each $i > j$.  Thus, each vertex $t_i$ has options to $t_j$ where $j < i$ and no other options, exactly fulfilling the requirements for $t_i$ to have value $\ast i$.

\begin{figure}
\begin{center}
\scalebox{.8}{
\begin{tikzpicture}
    \node[draw, circle] (s0) {$t_0$};
    \node[draw, circle] (s1) [right=of s0] {$t_1$};
    \node[draw, circle] (s2) [right=of s1] {$t_2$};
    \node[draw, circle] (s3) [right=of s2] {$t_3$};
    \node[] (dots) [right=of s3] {$\cdots$};
    \node[draw, circle] (sn) [right=of dots] {$t_{g-2}$};

    \path[->]
      (s1) edge (s0)

      (s2) edge [bend left] (s0)
      (s2) edge (s1)

      (s3) edge (s2)
      (s3) edge [bend left] (s1)
      (s3) edge [bend left] (s0)

      (dots) edge (s3)

      (sn) edge (dots)
      (sn) edge [bend left] (s3)
      (sn) edge [bend left] (s2)
      (sn) edge [bend left] (s1)
      (sn) edge [bend left] (s0)
    ;

\end{tikzpicture}}\end{center}
\caption{vertices $t_0$ through $t_{g-2}$.  Each vertex $t_i$ has edges to $t_0, t_1, \ldots, t_{i-1}$.  Thus, the nimber value of the \ruleset{Geography} position at vertex $t_i$ is $\ast i$.}
\label{fig:stars}
\end{figure}
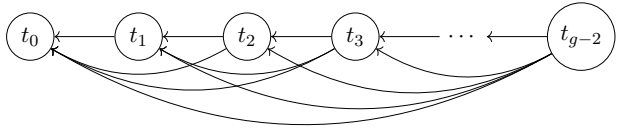

Now we build a new gadget to put it all together and combine the $f(Q_i)$ gadgets, as shown in figure \ref{fig:GeoRed}:

\begin{itemize}
    \item $\forall i \geq 1:$ add a vertex $c_i$, as well as edges $(c_i, b_i)$ and $\forall j \in [1, i-2]: (c_i, t_j)$.
    \item $\forall i \geq 2:$ add a vertex $d_i$ as well as edges $(d_i, b_1)$ and $\forall j \in [2, i-1]: (d_i, c_j)$.
    \item Finally, add a vertex $start$ with edges $(start, b_0)$, $(start, c_1)$, and $\forall j \in [2, g]: (start, d_j)$.
\end{itemize}

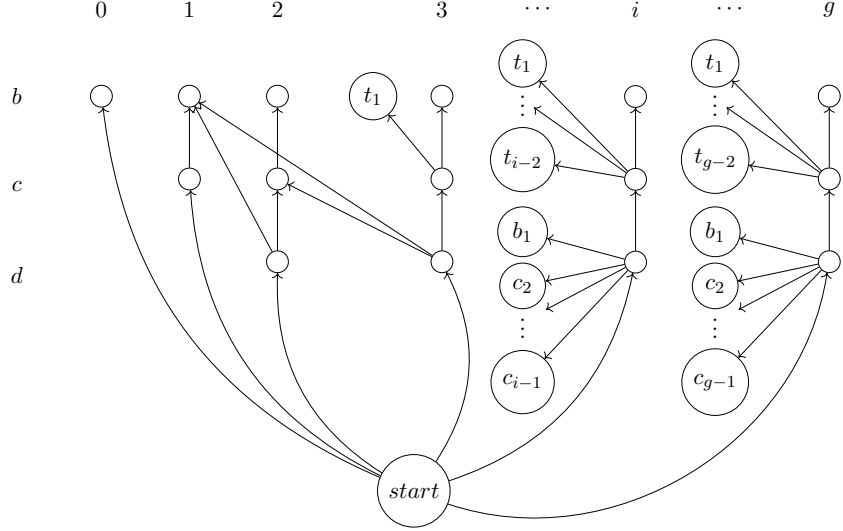
\begin{figure}
\begin{center}
\scalebox{.8}{
\begin{tikzpicture}
    \node[] (zero) {0};
    \node[draw, circle] (A0) [below=of zero] {};
    \node[] (A) [left=of A0] {$b$};
    \node[] (C) [below=of A] {$c$};
    \node[] (D) [below=of C] {$d$};

    \node[] (one) [right=of zero] {1};
    \node[draw, circle] (A1) [below=of one] {};
    \node[draw, circle] (C1) [below=of A1] {};

    \node[] (two) [right=of one] {2};
    \node[draw, circle] (A2) [below=of two] {};
    \node[draw, circle] (C2) [below=of A2] {};
    \node[draw, circle] (D2) [below=of C2] {};

    \node[] (spaceBeforeThree) [right=of two] {};
    \node[draw, circle] (t1) [right=of A2] {$t_1$};

    \node[] (three) [right=of spaceBeforeThree] {3};
    \node[draw, circle] (A3) [below=of three] {};
    \node[draw, circle] (C3) [below=of A3] {};
    \node[draw, circle] (D3) [below=of C3] {};

    \node[] (dots) [right=of three] {$\cdots$};
    \node[draw, circle] (t1b) at (7, -.9) {$t_1$};
    \node[] (tdots) at (7, -1.5) {$\vdots$};
    \node[draw, circle] (t2) at (7, -2.5) {$t_{i-2}$};
    \node[draw, circle] (a1) at (7, -3.7) {$b_1$};
    \node[draw, circle] (c2) at (7, -4.6) {$c_2$};
    \node[circle] (cdots) at (7, -5.2) {$\vdots$};
    \node[draw, circle] (c3) at (7, -6.2) {$c_{i-1}$};

    \node[] (i) [right=of dots] {$i$};
    \node[draw, circle] (Ai) [below=of i] {};
    \node[draw, circle] (Ci) [below=of Ai] {};
    \node[draw, circle] (Di) [below=of Ci] {};

    \node[] (dots2) [right=of i] {$\cdots$};
    \node[draw, circle] (t1c) at (10.2, -.9) {$t_1$};
    \node[] (tdotsb) at (10.2, -1.5) {$\vdots$};
    \node[draw, circle] (t2b) at (10.2, -2.5) {$t_{g-2}$};
    \node[draw, circle] (a1b) at (10.2, -3.7) {$b_1$};
    \node[draw, circle] (c2b) at (10.2, -4.6) {$c_2$};
    \node[circle] (cdotsb) at (10.2, -5.2) {$\vdots$};
    \node[draw, circle] (c3b) at (10.2, -6.2) {$c_{g-1}$};

    \node[] (n) [right=of dots2] {$g$};
    \node[draw, circle] (An) [below=of n] {};
    \node[draw, circle] (Cn) [below=of An] {};
    \node[draw, circle] (Dn) [below=of Cn] {};

    \node[draw, circle] (root) [below left=of c3] {$start$};

    \path[->]
      (C1) edge (A1)

      (D2) edge (C2)
      (C2) edge (A2)
      (D2) edge (A1)

      (C3) edge (A3)
      (C3) edge (t1)
      (D3) edge (C3)
      (D3) edge (C2)
      (D3) edge (A1)

      (Ci) edge (Ai)
      (Ci) edge (t1b)
      (Ci) edge (tdots)
      (Ci) edge (t2)
      (Di) edge (Ci)
      (Di) edge (a1)
      (Di) edge (c2)
      (Di) edge (cdots)
      (Di) edge (c3)

      (Cn) edge (An)
      (Cn) edge (t1c)
      (Cn) edge (tdotsb)
      (Cn) edge (t2b)
      (Dn) edge (Cn)
      (Dn) edge (a1b)
      (Dn) edge (c2b)
      (Dn) edge (cdotsb)
      (Dn) edge (c3b)

      (root) edge [bend left] (A0)
      (root) edge [bend left] (C1)
      (root) edge [bend left] (D2)
      (root) edge [bend right] (D3)
      (root) edge [bend right] (Di)
      (root) edge [bend right=50] (Dn)
    ;

\end{tikzpicture}}\end{center}
\caption{The gadget combining all of the $f(Q_i)$ gadgets into a single \rsGeography\ instance.  The vertices $b_i$, $c_i$, and $d_i$ are in rows and columns indexed by the letters on the left and the numbers along the top.  Each $b_i$ has an edge to $a_i$ as in figure \ref{fig:Qs}.}
\label{fig:GeoRed}
\end{figure}



\begin{lemma}
    The \rsGeography\ position starting at each vertex $c_i$ has value $\ast (i-1)$ if $G \neq \ast i$, and value 0 otherwise.
    \label{lem:cvalues}
\end{lemma}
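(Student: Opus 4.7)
The plan is to compute $\nimber(c_i)$ directly by applying the $\mex$ definition (Equation \ref{Eqn:nimber}) to the out-neighbors of $c_i$ in the \rsGeography\ graph. By the construction, the options at $c_i$ are exactly $b_i$ together with the utility vertices $t_1, t_2, \ldots, t_{i-2}$. The nimbers of these options are already in hand: each $t_j$ satisfies $\nimber(t_j) = \ast j$ by Figure \ref{fig:stars}, and the three-bullet analysis immediately following Figure \ref{fig:Qs} gives
\[
\nimber(b_i) \;=\; \begin{cases} \ast & \text{if } G = \ast i,\\ 0 & \text{if } G \neq \ast i.\end{cases}
\]

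With those values in place, the proof splits into two cases. In the case $G = \ast i$, the multiset of option-nimbers at $c_i$ is $\{\ast, \ast 1, \ast 2, \ldots, \ast(i-2)\}$; since $\ast = \ast 1$, this collapses to $\{\ast 1, \ast 2, \ldots, \ast(i-2)\}$, missing $0$, so $\mex = 0$ and $\nimber(c_i) = 0$. In the case $G \neq \ast i$, the multiset is $\{0, \ast 1, \ast 2, \ldots, \ast(i-2)\}$, which is exactly $\{0, 1, 2, \ldots, i-2\}$, giving $\mex = i-1$ and hence $\nimber(c_i) = \ast(i-1)$. Both outcomes match the claim.

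The only subtle point is the boundary behaviour at small $i$, where the index range $[1, i-2]$ is empty and $c_i$ has $b_i$ as its sole option. For $i = 2$, the two cases reduce to $\mex(\{\ast\}) = 0$ and $\mex(\{0\}) = \ast 1$, which still agrees with $0$ and $\ast(i-1)$ respectively. (The case $i = 1$ is degenerate and is not needed elsewhere in the construction, since $c_1$ appears only as the $\ast 1$-target under the start vertex in the $G \neq \ast 1$ branch; one may either exclude $i = 1$ from the lemma or verify it directly.) Because every ingredient (the $\nimber(t_j)$ values and the $\nimber(b_i)$ case-split) has already been established, the main work is just the bookkeeping of the $\mex$ evaluation, and I expect no genuine obstacle beyond making the index ranges explicit.
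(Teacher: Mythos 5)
Your proof is correct and follows essentially the same route as the paper's: enumerate the options of $c_i$ (namely $b_i$ together with $t_1,\dots,t_{i-2}$), plug in the already-established nimbers of those vertices, and evaluate the $\mex$ in the two cases $G=\ast i$ and $G\neq\ast i$. The only difference is that you additionally check the small-$i$ boundary cases explicitly---and correctly flag that the formula $\ast(i-1)$ does not literally hold at $i=1$, where $c_1$'s sole option $b_1$ forces the value $\ast$ rather than $\ast 0$ when $G\neq\ast$---a point the paper's terser proof glosses over but which is consistent with how $c_1$ is actually used later.
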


\begin{proof}
    The \rsGeography\ position starting at $c_i$ has options to $t_j$, $\forall t \in [1, i-2]$.  That means that $c_i$ has options with values $\ast, \ldots, \ast(i-2)$.  If the move to $b_i$ has value 0, then there are moves to $0, \ast, \ldots, \ast (i-2)$, so the value at $c_i$ is $\ast (i-1)$.  Otherwise, there is no option from $c_i$ to a zero-valued position, so the value at $c_i$ is 0.
\end{proof}

\begin{lemma}
    If $G = 0$, then the \rsGeography\ position starting at each vertex $d_i$ has value $\ast i$.
    \label{lem:diG0}
\end{lemma}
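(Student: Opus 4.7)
The plan is to apply the mex recursion directly at $d_i$ under the standing hypothesis $G=0$. The first step is to read off the options of $d_i$ from the construction (and Figure \ref{fig:GeoRed}): namely the edge to $b_1$ and the edges to $c_j$ for each $j$ in the specified range. Since $d_i$'s nimber is determined entirely by the nimbers of its children in the \rsGeography\ tree, the whole proof reduces to evaluating those child nimbers under $G=0$ and then taking mex.

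To evaluate the children, I would exploit the fact that $G=0$ implies $G\neq *k$ for every $k\geq 1$. From the specification of $b_1$ — whose value is $*$ when $G=*1$ and $0$ otherwise — the case $G\neq *1$ gives $\nimber(b_1)=0$. For each $c_j$ in $d_i$'s adjacency list, the condition $G\neq *j$ together with Lemma \ref{lem:cvalues} yields $\nimber(c_j)=*(j-1)$. Consequently the set of option-nimbers at $d_i$ is exactly
\[
\{\,0,\ *1,\ *2,\ \ldots,\ *(i-1)\,\}.
\]
Applying mex to the underlying integer set $\{0,1,\ldots,i-1\}$ returns $i$, so $\nimber(d_i)=*i$, which is the claimed value.

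The only point that requires genuine care is the index bookkeeping: one must verify that the $c_j$-edges of $d_i$ supply precisely the nimber values $*1,\ldots,*(i-1)$, together with the $0$ coming from $b_1$, and no value $\geq *i$. This is what forces the mex to land exactly at $i$ rather than at $i-1$ or higher. Once the adjacency range of $d_i$ from the construction in Figure \ref{fig:GeoRed} is matched against the nimber values supplied by Lemma \ref{lem:cvalues} and the $b_1$-gadget, the mex computation is immediate and the lemma follows.
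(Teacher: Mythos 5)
Your proposal is correct and follows essentially the same route as the paper's proof: enumerate the options of $d_i$ (namely $b_1$ and the relevant $c_j$'s), use $G=0$ to conclude $\nimber(b_1)=0$ and, via Lemma \ref{lem:cvalues}, $\nimber(c_j)=\ast(j-1)$, so the option values are $0,\ast,\ldots,\ast(i-1)$ and the mex is $\ast i$. The index bookkeeping you flag does check out against the construction as drawn in Figure \ref{fig:GeoRed} (where $d_i$ also has the edge to $c_i$ in its own column), which is exactly what the paper's proof uses.
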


\begin{proof}
    $d_i$ has moves to $b_1$, $c_2, \ldots, c_i$.  Since $G=0$, by lemma \ref{lem:cvalues} none of the vertices $c_j$ have values 0 (and $b_1$ does have value 0), so the options have values $0, \ast, \ast 2, \ldots, \ast (i-1)$, respectively. The mex of these i $\ast i$, so $d_i$ has value $\ast i$.
\end{proof}

\begin{lemma}
    Let $G = \ast k$, where $k > 0$.  Then the \rsGeography\ position starting at each vertex $d_i$ has value $\ast i$ if $i < k$, and value $\ast (k-1)$ if $i \geq k$.
    \label{lem:diGk}
\end{lemma}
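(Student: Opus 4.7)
The plan is to apply Lemma \ref{lem:cvalues} to read off the nimbers of the options of $d_i$, and then compute the mex by case analysis on the relative size of $i$ and $k$. Recall that $d_i$ has options to $b_1$ and to $c_2, c_3, \ldots, c_i$. The value at $b_1$ is $\ast$ if $G = \ast 1$ and $0$ otherwise, while by Lemma \ref{lem:cvalues} each $c_j$ has value $0$ if $G = \ast j$ and $\ast(j-1)$ otherwise.

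\textbf{Case $i < k$.} Since $d_i$ is only defined for $i \geq 2$, this case forces $k \geq 3$, so in particular $k \neq 1$ and thus $b_1$ has value $0$. For every $j \in \{2,\ldots,i\}$ we have $j \le i < k$, so $G \neq \ast j$, and hence $c_j$ has value $\ast(j-1)$. The options of $d_i$ therefore realize exactly the nimbers $0, \ast, \ast 2, \ldots, \ast(i-1)$, giving $\mex = \ast i$, as claimed.

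\textbf{Case $i \geq k$.} Here we split off the subcase $k = 1$, in which $b_1$ has value $\ast$ and each $c_j$ (for $j \ge 2$) has value $\ast(j-1)$; the multiset of option values is $\{\ast,\ast,\ast 2,\ldots,\ast(i-1)\}$, which omits $0$, so $d_i$ has value $0 = \ast(k-1)$. For $k \geq 2$, $b_1$ has value $0$, and the $c_j$'s contribute $\ast(j-1)$ for every $j \in \{2,\ldots,i\} \setminus \{k\}$, while $c_k$ contributes $0$. The option values are thus
\[
0,\; \ast,\; \ast 2,\; \ldots,\; \ast(k-2),\; 0,\; \ast k,\; \ast(k+1),\; \ldots,\; \ast(i-1),
\]
which hits every non-negative integer up to $i-1$ \emph{except} $k-1$, so the mex is $\ast(k-1)$.

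The only genuine care-point, and the mild obstacle, is bookkeeping the boundary values of $k$: one must notice that the case $i < k$ cannot occur when $k = 1$ (because $i \ge 2$), and that in the case $i \ge k$ the single ``hole'' at $*(k-1)$ in the list of option values is exactly what is produced by $c_k$ collapsing from $\ast(k-1)$ to $0$. Once those two observations are made, the mex computation is routine in each case and the lemma follows.
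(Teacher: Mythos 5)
Your proof is correct and takes essentially the same route as the paper's: both read off the option values of $d_i$ from Lemma \ref{lem:cvalues} (plus the value of $b_1$) and compute the mex in the three cases $k=1$, $k\geq 2$ with $i<k$, and $k\geq 2$ with $i\geq k$. The only difference is cosmetic---you organize the split primarily by $i<k$ versus $i\geq k$ and note that $i<k$ forces $k\geq 3$, whereas the paper dispatches $k=1$ first---but the computations are identical.
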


\begin{proof}
    We need to prove this by cases.  We'll start with $k=1$, then show it for $k \geq 2$.

    When $k=1$, $d_i$ has moves to $b_1$, $c_2, \ldots, c_i$.  (There is no $d_0$ or $d_1$, so $i > k$.)  The value at $b_1$ is $\ast$, and by lemma \ref{lem:cvalues}, the remainder have values $\ast, \ast 2, \ldots, \ast (i-1)$, respectively.  0 is missing from this list, so $d_i = 0 = \ast (1-1) = \ast (k-1)$.

    For $k \geq 2$, we will split up our analysis into the two cases: $i < k$ and $i \geq k$.

    We will next consider the case where $k \geq 2$ and $i < k$.  From $d_i$ there are moves to $b_1$, $c_2, \ldots, c_i$.  Since $k > i$, these have values $0, \ast, \ast 2, \ldots, \ast (i-1)$, respectively, by lemma \ref{lem:cvalues}. The mex of these is $\ast i$, so $d_i$ has value $\ast i$.

    Finally, when $k \geq 2$ and $i \geq k$, $d_i$ has options to $b_1$, $c_2, \ldots, c_{k-1}, c_k, c_{k+1}, \ldots, c_i$.  By lemma \ref{lem:cvalues}, these have values $0, \ast, \ast 2, \ldots, \ast (k-2), 0, \ast k, \ldots, \ast (i-1)$, respectively.  $\ast (k-1)$ doesn't exist in that list, so that's the mex, meaning the value of $d_i$ is $\ast (k-1)$.
\end{proof}

\begin{theorem}
    Let $G = \ast k$.  Then the \rsGeography\ position beginning at $start$ equals $\ast k$.
\end{theorem}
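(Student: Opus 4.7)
The plan is to compute $\nimber(start)$ as the mex of the nimber values at its out-neighbors $b_0$, $c_1$, and $d_2, \ldots, d_g$, then split on the value of $k$. The values $d_2, \ldots, d_g$ are supplied directly by Lemmas \ref{lem:diG0} and \ref{lem:diGk}, so the only preliminary work is to pin down $b_0$ and $c_1$. Using the characterization of the $b_i$ already recorded in the construction ($b_i = \ast$ when $G = \ast i$ and $b_i = 0$ otherwise), I get $b_0 = \ast$ exactly when $k = 0$ and $b_0 = 0$ otherwise. For $c_1$ the index range $[1, i-2]$ in the construction is empty, so $c_1$ has the single option $b_1$; a direct mex computation yields $c_1 = 0$ when $k = 1$ and $c_1 = \ast$ when $k \neq 1$.

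Armed with these values, I would verify the theorem by a three-way case split. When $k = 0$, the options realize values $\{\ast, \ast, \ast 2, \ldots, \ast g\}$ (from $b_0$, $c_1$, and Lemma \ref{lem:diG0}), and the mex is $0$. When $k = 1$, every option is $0$ (note that each $d_j$ satisfies $j \geq 2 > k$, so Lemma \ref{lem:diGk} gives $d_j = \ast(k-1) = 0$; and $b_0 = c_1 = 0$), so the mex is $\ast$. When $k \geq 2$, Lemma \ref{lem:diGk} supplies $d_j = \ast j$ for $2 \leq j < k$ and $d_j = \ast(k-1)$ for $k \leq j \leq g$; combining these with $b_0 = 0$ and $c_1 = \ast$, the distinct option values at $start$ are precisely $\{0, \ast, \ast 2, \ldots, \ast(k-1)\}$, whose mex is $\ast k$.

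The main potential obstacle is the bookkeeping in the final case: one needs both that the sequence $0, \ast, \ast 2, \ldots, \ast(k-1)$ is \emph{complete} (so the mex equals $\ast k$ rather than some smaller nimber) and that \emph{no} option attains a value of $\ast k$ or larger. Completeness comes from pairing the contribution of $b_0$ (value $0$), $c_1$ (value $\ast$), and the ``$d$-ladder'' $d_2, \ldots, d_{k-1}$ (values $\ast 2, \ldots, \ast(k-1)$); the upper bound is automatic since $k \leq g$ guarantees that the remaining $d_j$ saturate at $\ast(k-1)$ rather than climbing to $\ast k$. Once this is observed, the proof reduces to a routine mex calculation in each case, and the theorem follows.
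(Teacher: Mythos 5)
Your proposal is correct and follows essentially the same route as the paper: the same case split on $k \in \{0\}, \{1\}, [2,g]$, the same reliance on Lemmas \ref{lem:diG0} and \ref{lem:diGk} for the $d_i$, and the same direct evaluation of $b_0$ and $c_1$. Your direct mex computation for $c_1$ (noting the empty range $[1,-1]$) is in fact slightly more careful than invoking Lemma \ref{lem:cvalues} at $i=1$, whose formula $\ast(i-1)$ degenerates there; the paper's theorem proof implicitly uses the same corrected values you derive.
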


\begin{proof}
    The options from $start$ are $b_0$, $c_1$, and $\forall i \in [2, g]: d_i$.  If $G = 0$, then $b_0$ is $\ast$, $c_1$ is $\ast$, and, by lemma \ref{lem:diG0}, each $d_i$ is $\ast i$.  Since 0 is missing from these options, the value at $start$ is $0 = \ast k$.

    If $G = \ast$, then the move to $b_0$ is 0, the move to $c_1$ is also 0, and, by lemma \ref{lem:diGk}, the moves to $d_i$ is each also 0, because each $i > k=1$ and $\ast (k-1) = \ast (1-1) = 0$.  All the options are to 0, so the value at $start$ is $\ast = \ast k$.

    Finally, if $G = \ast k$, where $k \geq 2$, then the moves are to $b_0$, $c_1$, $d_2$, \ldots, $d_{k-1}$, $d_k$, $d_{k+1}, \ldots, d_g$.  These have values, respectively, $0, \ast, \ast 2, \ldots, \ast (k-1), \ast (k-1), \ast (k-1), \ldots, \ast (k-1)$, by lemma \ref{lem:diGk}.  The mex of these is $k$, so the value of $start$ is $\ast k$.
\end{proof}




\section{Nimber Secrets: A \cclass{PSPACE}-Complete Homomorphic Encoding}

Mathematically, Sprague-Grundy Theory together with Bouton's \ruleset{Nim} characterization
provide an algebraic view of impartial games.
Their framework establishes that the Grundy function, $\nimber()$, is a morphism from
the monoid $(\mathbb{I},+)$---impartial games with disjunctive sum---to the monoid $(\mathbb{Z}^+\cup \{0\},\oplus)$---non-negative integers in binary representations with bitwise-xor:
\begin{eqnarray*}
\nimber(G+H) = \nimber(G)\oplus \nimber(H) \quad\quad
\mbox{$\forall$  $G, H\in \mathbb{I}$}.
\end{eqnarray*}
Elegantly,
\begin{enumerate}
\item $\nimber(G)$ can also be represented by
  a natural game, i.e., a single pile \ruleset{Nim} with $\nimber(G)$ stones, and
\item the sum of two single-pile \ruleset{Nim}
  games can be represented by another single-pile \ruleset{Nim}
  whose game tree can be significantly different
  from the game tree of the sum.
\end{enumerate}
The only ``blemish''---from computational perspective---is
  that the Grundy function can be intractable to compute \cite{DBLP:journals/jcss/Schaefer78},
  even for some tractable games in $\mathbb{I}^P$ \cite{BurkeFerlandTengGrundy}.

Theorem \ref{Theo:Restate1.2} provides an alternative natural game representation of $\nimber(G)$, for $G \in \mathbb{I}^P$.
In contrast to \ruleset{Nim},
   \ruleset{Generalized Geography} admits a polynomial-time
    algorithm for computing this representation from $G$ without the need of computing $\nimber(G)$.
In fact, using \ref{Theo:Restate1.2},
  one can also compute, in polynomial time, a single-graph \ruleset{Generalized Geography} representation of
    the sum of any two (\ruleset{Generalized Geography})
    games:
\begin{theorem}[Homomorphic Game Encoding of Grundy Values]
\label{IPH}
For any pair of games $G_1, G_2 \in \mathbb{I}^P$,
  one can, in polynomial-time in $\size(G_1)+\size(G_2)$, construct a \ruleset{Generalized Geography} game $G$,
  such that:
$$\nimber(G) = \nimber(G_1) \oplus \nimber(G_2).$$
\end{theorem}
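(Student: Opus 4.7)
The plan is to obtain the desired game as a one-shot application of Theorem \ref{Theo:Restate1.2} to the disjunctive sum $G_1 + G_2$, letting Bouton's identity (Equation \ref{SGE}) supply the homomorphic property. In other words, rather than trying to combine two \rsGeog\ games directly (an operation to which \rsGeog\ is not closed), I would first form the abstract sum, recognize it as an impartial position still living inside $\mathbb{I}^P$, and then invoke the already-established Sprague-Grundy-completeness of \rsGeog\ to encode it.

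First, I would verify that $G_1 + G_2$ presents as a bona fide position of $\mathbb{I}^P$. Formally, one packages the disjunctive sum as a composite ruleset whose positions are pairs $(H_1, H_2)$ reachable from $(G_1, G_2)$, whose options are $\{(H_1', H_2) : H_1' \in \rho_{R_1}(H_1)\} \cup \{(H_1, H_2') : H_2' \in \rho_{R_2}(H_2)\}$, and whose size is $\size(H_1) + \size(H_2) + O(1)$. Closure of $\mathbb{I}^P$ under disjunctive sums is routine: the tree-sum $T_{H_1} \blacksquare T_{H_2}$ has height bounded by the sum of the component heights, option enumeration remains polynomial-time because it delegates to the two original option oracles, and position descriptions remain polynomial in length. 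Hence $G_1 + G_2$ is a genuine position of $\mathbb{I}^P$ of size polynomial in $\size(G_1) + \size(G_2)$.

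Next, I would apply the polynomial-time reduction $\phi$ from Theorem \ref{Theo:Restate1.2} to this position, producing $G := \phi(G_1 + G_2)$, a \rsGeog\ game satisfying $\nimber(G) = \nimber(G_1 + G_2)$. By Equation \ref{SGE}, $\nimber(G_1 + G_2) = \nimber(G_1) \oplus \nimber(G_2)$, which is exactly the identity claimed in Theorem \ref{IPH}. The overall runtime is polynomial in $\size(G_1) + \size(G_2)$ because $\phi$ runs in time polynomial in the size of its input, and the composite position has size polynomial in that quantity.

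There is no genuine computational obstacle here; the only subtle point to flag is the \emph{uniformity} of $\phi$: Theorem \ref{Theo:Restate1.2} must apply to the freshly assembled composite ruleset, not only to ``named'' rulesets treated individually. Inspection of the construction in Section \ref{Sec:SPComplete} confirms this, since $\phi$ accesses its input only through the size-based bound $g = \min(h, l)$ and through the polynomial-time option oracle, both of which the disjunctive-sum ruleset inherits from its components (and QSAT reductions of the auxiliary queries $Q_i$ use only these ingredients). In essence, Theorem \ref{IPH} is a direct corollary of Theorem \ref{Theo:Restate1.2} combined with the monoid morphism supplied by Sprague-Grundy theory.
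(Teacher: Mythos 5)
Your proposal is correct and follows essentially the same route as the paper's own proof: form the disjunctive sum $G_1+G_2$ as a position of $\mathbb{I}^P$ (using closure of polynomially-short games under disjunctive sum), apply the reduction of Theorem \ref{Theo:Restate1.2} to it, and invoke the Sprague-Grundy/Bouton identity for correctness. Your added remarks on the uniformity of $\phi$ and the explicit packaging of the composite ruleset are just a more careful spelling-out of what the paper states tersely.
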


\begin{proof}
Given the game functions $\rho_{G_1}$ and $\rho_{G_2}$, in time
linear in $\size(G_1)+\size(G_2)$, one can
construct a game function $\rho_{(G_1+G_2)}$
for their disjunctive sum $G_1+G_2$ such that the game tree of $\rho_{(G_1+G_2)}$ is $G_1 \blacksquare G_2$.
This theorem follows from Theorem \ref{theo:GGComplete} and the following basic fact:
\begin{quote}
{\em The disjunctive sum $(G_1+G_2)$ of two polynomially-short games $G_1,G_2\in \mathbb{I}^P$ remains polynomially short in terms of $\size(G_1)+\size(G_2)$.
}
\end{quote}
Now we can apply Theorem \ref{theo:GGComplete} to $\rho_{(G_1+G_2)}$
to construct a prime \ruleset{Generalized Geography} $G$
in time polynomial in $\size(G_1)+\size(G_2)$.
$G$ satisfies:
$\nimber(G)= \nimber(G_1)\oplus\nimber(G_2)$.
The correctness follows from that of Theorem \ref{theo:GGComplete} and
Sprague-Grundy Theory.
\end{proof}

Figuratively,
{\em every impartial game $G$ encodes a secret, $\nimber(G)$.} The game $G$ itself can be viewed as an ``{\em encryption}'' of its nimber-secret.
The players who can uncover
this nimber-secret can play the game optimally.
For every game $G\in \mathbb{I}^P$, this secret can be ``decrypted''  by a DFS-based evaluation of $G$'s game tree in polynomial space.
Thus, computing the nimber for $\mathbb{I}^P$ is \cclass{PSPACE}-complete (under the Cook-Turing reduction).

Speaking of encryption, several basic cryptographic functions have homomorphic properties.
For example, for every RSA encryption function $\mbox{\rm ENC}_{\mbox{\footnotesize {RSA}}}$, for every pair of its messages $m_1$ and $m_2$, the following holds:
\begin{eqnarray*}
\mbox{\rm ENC}_{\mbox{\footnotesize {RSA}}}(m_1\times m_2) = \mbox{\rm ENC}_{\mbox{\footnotesize {RSA}}}(m_1)\times \mbox{\rm ENC}_{\mbox{\footnotesize {RSA}}}(m_2).
\end{eqnarray*}
Another example is the discrete-log function.
For any prime $p$, any primitive element $g\in Z^*_p$,
and any two messages $m_1, m_2\in Z^*_P$:
\begin{eqnarray*}
g^{m_1+m_2} = g^{m_1}\times g^{m_2}.
\end{eqnarray*}
Assuming RSA encryption and the discrete-log function are computationally intractable to invert, these morphisms state that without decoding the secret messages from their encoding, one can efficiently encode their product or sum, respectively, with the RSA and discrete-log functions.
In cryptography, these functions are said to support {\em partially homomorphic encryption}.

Together, Theorem \ref{IPH} and Theorem \ref{Theo:Restate1.2}
 establish that polynomially-short impartial games are themselves {\em partially homomorphic encodings} of their nimber-secrets:
Without decoding their nimbers, one can efficiently create a
\ruleset{Generalized Geography} game encoding the $\oplus$ of their nimbers.

Note again that homomorphic cryptographic functions, such as
discrete log and RSA encryption,
satisfy an additional property:
They are one-way functions, i,e., tractable to compute but are assumed to be intractable to invert.
Theorem \ref{theo:Homomorphic} (and hence Theorem \ref{IPH}) is inspired by
the concept of partially homomorphic encryption.
However, its focus is not on a one-way encoding of targeted
 nimber-values with impartial games in $\mathbb{I}^P$.
Rather, it characterizes
  the complexity-theoretical
  homomorphism in this classical and natural encoding for impartial games.
Because of the one-way property, RSA and discrete-log functions are decodable by an \cclass{NP}-oracle.
In contrast, the nimber-decoding of impartial games in $\mathbb{I}^P$ is in general \cclass{PSPACE}-hard.

\subsection{Natural Prime Games}
\label{Sec:PrimeGames}



Inspired by Conway's notation with \textit{parts} within the context of misere games \cite{ONAG:2001}\cite{siegel2021structure}, we use the following terms to identify what game trees can be described as isomorphically the sum of two other games:


\begin{definition}[Prime Games and Composite Games]
A game $G$ is a composite game if it is a
sum of two games that both have tree-height at least 1. Otherwise, it is prime.
\end{definition}

Note that prime games, in a similar manner to prime numbers, can only be summed by a game with tree-height of 0 (ie: just a single vertex) and itself.
It follows from the basic property of Cartesian graph products that each composite game
has a unique decomposition into
prime games.

\begin{proposition}[Decomposition in Prime Games]
A game $G$ is isomorphic to a disjunctive sum of two games $A$ and $B$ if and only if
its game tree $T_G$ is
isomorphic to $T_A \blacksquare T_B$.
\end{proposition}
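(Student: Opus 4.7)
The plan is to reduce the biconditional to a single structural identity, namely that $T_{A+B} \cong T_A \blacksquare T_B$, and then apply the Game Isomorphism definition to translate between games and their game trees. This identity is essentially the content of the constructions given in Section~2: $T_A \blacksquare T_B$ was introduced precisely as the tree-expansion of the Cartesian product $T_A \square T_B$, and the disjunctive sum ruleset was defined so that from a position $(A,B)$ the feasible options are exactly those of the form $(A',B)$ where $A'$ is an option of $A$, or $(A,B')$ where $B'$ is an option of $B$.

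First I would establish the structural identity $T_{A+B} \cong T_A \blacksquare T_B$ by induction on the sum of the heights of $T_A$ and $T_B$. The base case, in which both $A$ and $B$ are terminal, is immediate since both trees consist of a single root node. For the inductive step, the children of the root of $T_{A+B}$ partition into $(A',B)$ for each option $A'$ of $A$ and $(A,B')$ for each option $B'$ of $B$; the inductive hypothesis identifies the subtree rooted at $(A',B)$ with $T_{A'} \blacksquare T_B$, and analogously for the $B$-moves. Gluing these subtrees back at the root recovers exactly the tree-expansion of $T_A \square T_B$, which is $T_A \blacksquare T_B$ by definition.

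With this identity in hand, both directions of the proposition fall out from the definition of game isomorphism, which identifies a game with its game tree up to isomorphism. For the forward direction, if $G \cong A+B$ then $T_G \cong T_{A+B} \cong T_A \blacksquare T_B$. For the converse, if $T_G \cong T_A \blacksquare T_B$, then chaining with the identity yields $T_G \cong T_{A+B}$, which by definition means $G \cong A+B$.

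The main bookkeeping concern, rather than any deeper obstacle, is carefully distinguishing the Cartesian product $T_A \square T_B$, which is a DAG because a vertex such as $(A',B')$ may be reachable via two different parent paths, from its tree-expansion $T_A \blacksquare T_B$, in which such vertices are duplicated so that each node has a unique path from the root. Since the game tree $T_{A+B}$ is by definition a rooted tree, distinct sequences of moves give rise to distinct nodes and the duplication is performed automatically; verifying that this unfolding matches the formal tree-expansion in the inductive step is the only point that requires explicit care.
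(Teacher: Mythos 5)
Your proof is correct and takes essentially the route the paper intends: the paper states this proposition without an explicit proof, treating it as immediate from the Section~2 setup in which $T_{A+B}$ is characterized as precisely the tree-expansion $T_A \blacksquare T_B$ of the Cartesian product $T_A \square T_B$. Your inductive verification of the identity $T_{A+B} \cong T_A \blacksquare T_B$, followed by translation through the definition of game isomorphism, simply makes that implicit argument explicit.
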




In Bouton theory for \ruleset{Nim},
 for any non-negative integers $a, b$,
 even though $\nimber(\ruleset{Nim}(a\oplus b)) = \nimber(\ruleset{Nim}(a)) \oplus \nimber(\ruleset{Nim}(b))$,
  $\ruleset{Nim}(a\oplus b)$ is not isomorphic
  to $\ruleset{Nim}(a) \oplus \ruleset{Nim}(b)$.
In fact, $\ruleset{Nim}(a\oplus b)$ is a natural prime game.
Similarly, even though in the proof of Theorem \ref{IPH}, $\rho_{(G_1+G_2)}$ simply copies the syntactic game transition
function that can generate $G_1 \blacksquare G_2$,
  the construction in Theorem \ref{theo:GGComplete}
generates the homomorphic prime game encoding of $\nimber(G_1)\oplus\nimber(G_2)$.

\begin{theorem}[Prime Geography]
\label{theo:Prime}
Each \ruleset{Generalized Geography} position as created in the reduction in Theorem \ref{Theo:Restate1.2} is a prime game.
\end{theorem}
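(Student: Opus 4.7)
My plan is to argue by contradiction using the Decomposition Proposition: if the constructed \ruleset{Generalized Geography} position $G$ (rooted at $start$) were composite, then $T_G \cong T_A \blacksquare T_B$ for some games $A, B$ of heights at least $1$. Writing $a = |\rho(A)|$ and $b = |\rho(B)|$, both $\geq 1$, the children of the root of $T_A \blacksquare T_B$ fall into two classes: the $a$ ``A-moves'' of shape $A' + B$ (each with $|\rho(A')| + b$ sub-options) and the $b$ ``B-moves'' of shape $A + B'$ (each with $a + |\rho(B')|$ sub-options). The structural invariant I will exploit is therefore that every child of the root has at least $\min(a, b)$ sub-options, and the root has $a + b$ options in total.

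The next step is to catalogue the options of $start$ and their sub-option counts directly from the construction in Theorem~\ref{Theo:Restate1.2}. The root $start$ has exactly $g + 1$ options: $b_0$, $c_1$, and $d_2, \ldots, d_g$, forcing $a + b = g + 1$. Reading off the gadget, $b_0$ has the unique successor $a_0$; $c_1$ has the unique successor $b_1$ (since the index set $[1, -1]$ is empty, so no $t_j$ edges are added); and each $d_i$ for $i \in [2, g]$ has $i - 1$ successors ($b_1$ together with $c_2, \ldots, c_{i-1}$). Thus the multiset of sub-option counts among the children of $start$ is $\{1, 1, 1, 2, 3, \ldots, g - 1\}$.

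For $g \geq 2$, this already suffices. The minimum sub-option count is $1$, so the invariant gives $\min(a,b) \leq 1$, hence $\min(a, b) = 1$. Assuming without loss of generality $a = 1$ (so $b = g$), the single A-move must have sub-option count $\geq g$, while the maximum sub-option count among the children of $start$ is $g - 1$. This contradiction finishes the bulk of the cases.

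The main obstacle is the small-$g$ cases. For $g = 0$ the root has at most one option, so $a + b \leq 1$ is incompatible with $a, b \geq 1$. For $g = 1$ the only arithmetic possibility is $a = b = 1$; the equations $|\rho(A_1)| + 1 = 1$ and $|\rho(B_1)| + 1 = 1$ (forced by the children $b_0$ and $c_1$ each having one sub-option) then make $A_1$ and $B_1$ both terminal, so $A$ and $B$ each have height exactly $1$ and $T_A \blacksquare T_B$ has height $2$. I close the argument by noting that the subtrees of $b_0$ and $c_1$ descend through the chain $a_0, s_0$ (respectively $b_1, a_1, s_1$) into the classical QSAT-to-Geography gadget $f(Q_i)$, whose game tree has depth strictly greater than $2$ by the Lichtenstein--Sipser construction~\cite{LichtensteinSipser:1980}, contradicting the required height bound of $2$. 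The most delicate portion is just verifying this last depth bound from the cited gadget; I do not expect it to be difficult, as $s_i$ alone already branches into two choices whose subtrees are themselves non-trivial.
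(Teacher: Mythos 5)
Your proof is correct, but it takes a genuinely different route from the paper's. Both arguments open the same way: the out-degree-one vertex $b_0$ forces one factor's root to have exactly one option (in your notation, $\min(a,b)=1$; in the paper's, the move to $b_0$ must be a terminal move in one factor, so the subtree at $b_0$ is isomorphic to the other factor). From there the paper stays with the two degree-one options $b_0$ and $c_1$: it splits on which factor the $c_1$-move lives in and derives contradictions from (i) the non-isomorphic chain lengths below $b_0$ and $c_1$ (two versus three steps to reach the branching vertices $s_0$ and $s_1$) and (ii) the fact that $c_1$ has no terminal option even though the product would force one at $(x_1,y_1)$. You instead push the counting one level further: in a tree sum some child of the root must have at least $\max(a,b)=g$ sub-options, while the largest out-degree among $start$'s children is $g-1$ (attained at $d_g$). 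This exploits the $d_i$ vertices, which the paper's proof never touches, and for $g\geq 2$ it is a purely numerical argument with no isomorphism checking. The trade-off is that your argument degenerates for $g\in\{0,1\}$ and you must fall back on a height comparison there (which is fine --- the subtree below $b_0$ already has height at least $2$ because $a_0\to s_0$ exists, so the tree at $start$ cannot have height $2$; you do not really need the full depth of the Lichtenstein--Sipser gadget), whereas the paper's argument is uniform in $g$ once $c_1$ is present. Your careful reading of the gadget degrees ($c_1$ has the single successor $b_1$ since $[1,-1]$ is empty, and $d_i$ has $i-1$ successors) matches the construction, so the catalogue of sub-option counts $\{1,1,1,2,\ldots,g-1\}$ is right.
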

\begin{proof}
Suppose that our game tree is claimed to be $X \blacksquare Y$, with root vertices $x_0$ and $y_0$, respectively, and both $X$ and $Y$ have height at least 1. We will find a contradiction.


Consider vertex $b_0$, an option of $start$.  Since $b_0$ has only one option, that means that it must correspond to a terminal move in either $X$ or $Y$.  WLOG, let it correspond to $x_1$, a terminal vertex in $X$.  Thus, $b_0 = (x_1, y_0)$ and is isomorphic to $Y$, because $x_1$ is terminal in $X$.

The move to $c_1$ must be available, since otherwise the $start$ would have only one option and thus not be a tree sum. This position also has only one option from itself. There are two cases: it either corresponds to a terminal vertex in $X$, say $x_2$, or a terminal vertex in $Y$, say $y_1$.  In the first case, then $c_1 = (x_2, y_0)$, which is isomorphic to $Y$.  This causes a contradiction, however, because the subtrees generated by $b_0$ and $c_1$ are not isomorphic.  ($b_0$ has 2 moves to reach $s_0$, but $c_1$ has 3 moves to reach $s_1$.)

In the second case, $c_1 = (x_0, y_1)$, and $y_1$ is terminal in $Y$.  Then that means there must be a move from $c_1$ to a vertex, $v$, corresponding to $(x_1, y_1)$.  Since both $x_1$ and $y_1$ are terminal (in $X$ and $Y$), that means $v$ will be terminal in the tree sum.  However, $c_1$ doesn't have any options to a terminal vertex.  This case cannot happen and, without any other possible cases, no such $X$ and $Y$ exist as factors for our tree.
\end{proof}

\section{Final Remarks and Open Questions}
\label{Sec:Final}

It is expected that \cclass{PSPACE}-complete games
encode some valuable secrets.
And once revealed,
those secrets can help players in their decision making (e.g., under the guidance of
 Sprague-Grundy Theory).
In this work, through the lens of computational complexity theory, we see that all polynomially-short impartial games neatly encode their nimber-secrets, which can be efficiently transferred into prime \ruleset{Generalized Geography} games.
The game encoding is so neat that the bitwise-xor of any pair  of these nimber-secrets can be homomorphically re-encoded into another prime game in polynomial time, without the need to
find the secrets first.

We are excited to discover this
 natural mathematical-game-based
\cclass{PSPACE}-complete
homomorphic encoding.
{\em Recreational mathematics can be
simultaneously
serious and fun!}


The crypto-concept of (partially) homomorphic encryption has inspired us
to identify these basic
complexity-theoretical properties of this fundamental concept in CGT.
It would have been more fulfilling
if we could also make our findings useful in cryptography.
Currently, we are exploring potential cryptographic applications of this ``game encoding of strategic secrets,'' particularly
on {\em one-way} game generation
for targeted nimbers.
In addition to finding direct cryptographic connections, we are still exploring several concrete CGT questions.
Below, we share some of them.


\subsection{Expressiveness of Intractable Games: Sprague-Grundy Completeness}
\label{sub:SGC}

In this paper, we have proved that
the \cclass{PSPACE}-complete
polynomially-short
\ruleset{Generalized Geography} is prime Sprague-Grundy complete
for ${\cal I}^P$.
We observe that not all games in ${\cal I}^P$ with \cclass{PSPACE}-hard nimber computation are Sprague-Grundy complete for the family because:

\begin{enumerate}
\item Some intractible games can't encode nimbers polynomially related to the input size
\item Some games with intractible nimber computation have some nim values which are tractible.
\end{enumerate}

For (1), our first example is
\ruleset{Generalized Geography on Degree-Three Graphs}.
In \cite{LichtensteinSipser:1980},
Lichtenstein and Sipser  proved that
\ruleset{Generalized Geography} is \cclass{PSPACE}-complete to solve even when the game graph is planar, bipartite, and has a maximum degree of three. These graph properties are essential to their analysis of
the two-dimensional grid-based \ruleset{Go}.
Mathematically, the maximum achievable nimber in \ruleset{Generalized Geography on Degree-Three Graphs} is three.
Thus, there is no nimber-preserving
  reduction from higher nimber position in $\mathbb{I}^P$ to
  these low-degree \ruleset{Generalized Geography} games.
For the same reason, the \cclass{PSPACE}-complete \ruleset{Atropos} introduced in \cite{DBLP:journals/im/BurkeT08} can not be Sprague-Grundy complete.

\begin{lemma}
    The value of any \ruleset{Atropos} position must be one of these nimbers: $0, \ast, \ast 2, \cdots, \ast 7$.  (And thus, \ruleset{Atropos} cannot be Sprague-Grundy complete.)
    \label{lem:atropos}
\end{lemma}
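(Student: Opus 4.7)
The plan is to establish a uniform, board-size-independent upper bound on the out-degree of the \ruleset{Atropos} game tree and then convert it into the claimed bound on nimbers via the mex identity. Recall from \cite{DBLP:journals/im/BurkeT08} that an \ruleset{Atropos} position lives on a Sperner triangle whose boundary is pre-colored with three colors; on each turn the active player colors a single uncolored vertex with one of the three palette colors, subject to two constraints: (i) the chosen vertex must be adjacent (in the underlying triangular grid) to the most recently played vertex whenever at least one such vertex remains uncolored, and (ii) the chosen color must not immediately create a panchromatic triangle. Since every vertex in the triangular grid has at most six neighbors, constraint (i) already confines the candidate destinations to a small local neighborhood of the previous move.

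The core step is to bound the branching factor by a small constant independent of the board size. For each candidate destination $w$, each of the up-to-six little triangles through $w$ whose other two vertices are already colored with distinct colors forbids exactly one color for $w$; the standing invariant that the current board is itself not yet panchromatic places additional structural restrictions on the combined pattern of forbidden colors across the six triangles around the last-played vertex. A case analysis on the location of the last-played vertex (corner, edge, interior) and on the color-multiset of its already-colored neighbors drives the crude $6 \times 3 = 18$ upper bound down to at most $8$ legal options; a final refinement argues that no single \ruleset{Atropos} position can simultaneously realize all of $0, \ast, \ast 2, \ldots, \ast 7$ among its options (typically because pairs of moves that differ only in the choice of color often lead to isomorphic follow-up subtrees and hence duplicate nimbers). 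Once $|\rho_{R}(G)| \leq 7$ is in hand, the identity $\nimber(G) = \mex\{\nimber(G') : G' \in \rho_{R}(G)\}$ together with the elementary $\mex(S) \leq |S|$ gives $\nimber(G) \in \{0, \ast, \ast 2, \ldots, \ast 7\}$ as claimed.

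The parenthetical Sprague--Grundy-incompleteness then follows at once: by Theorem~\ref{theo:GGComplete} the family $\mathcal{I}^P$ contains positions of every nimber $\ast k$, and, more elementarily, unary-encoded single-pile \ruleset{Nim} of size $n$ directly realizes $\ast n$ in $\mathcal{I}^P$ for arbitrarily large $n$. A ruleset whose positions have uniformly bounded nimbers therefore cannot serve as the target of a nimber-preserving reduction from all of $\mathcal{I}^P$, so \ruleset{Atropos} is not Sprague-Grundy complete.

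The hardest step will be the sharp tightening of the branching-factor bound: the adjacency rule, the three-color palette, and the no-panchromatic-triangle invariant must be exploited together, with a separate sub-argument handling the exceptional positions in which no legal adjacent move exists and the mover is permitted to play anywhere on the board. For those exceptional positions the local-degree argument breaks down, and we instead observe that such boards are very close to terminal and admit only a handful of globally legal (vertex, color) pairs, which keeps the overall branching factor inside the same small constant needed for the nimber bound.
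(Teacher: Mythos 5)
There is a genuine gap, concentrated in exactly the two places where your outline replaces an argument with an assertion. First, your central quantitative claim---that a case analysis drives the crude $6\times 3=18$ bound on legal $(\text{vertex},\text{color})$ pairs down to at most $8$, and a further refinement to at most $7$---is not carried out, and it is not true in general: a last-played interior vertex with six uncolored neighbors and few nearby colored vertices admits well more than eight legal moves, so no bound of the form $|\rho_R(G)|\le 7$ is available and the step ``$\mex(S)\le |S|$'' cannot be invoked this way. Second, and more seriously, your treatment of the exceptional ``jump'' positions is wrong. Being fully surrounded is a purely local property of the last-played vertex; the rest of the board can be almost entirely uncolored, so a jump position is not ``very close to terminal'' and its number of legal moves grows with the board size. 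Any proof that proceeds by bounding the \emph{count} of options breaks down irreparably there.

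The paper's proof instead bounds the \emph{values} of the options. When the last-played vertex has an uncolored neighbor, the options are confined to its at most six neighbors, which is used to cap the value at $\ast 6$. In the jump case, either every playable vertex has an uncolored neighbor, so each of the (possibly very many) options is a position of the first type with value at most $\ast 6$ and the $\mex$ is therefore at most $\ast 7$; or some playable vertices are themselves fully surrounded, in which case the position decomposes as a disjunctive sum $G = H + k\times \ast$ over those $k$ isolated playable vertices, and Sprague--Grundy theory keeps $\nimber(G)$ in the same bounded range as $\nimber(H)$. To repair your proposal, replace the branching-factor bound by an argument that the set of option \emph{nimbers} is contained in $\{0,\ast,\ldots,\ast 6\}$ (or at least omits some value below $\ast 8$), and handle the jump positions by this value-bounding-plus-sum-decomposition argument rather than by counting moves. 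Your final step---that a ruleset with uniformly bounded nimbers cannot be the target of a nimber-preserving reduction from all of ${\cal I}^P$, which contains positions of every nimber---is correct and matches the paper.
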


\begin{proof}
    For the details of how \ruleset{Atropos} is played, please see \cite{DBLP:journals/im/BurkeT08}.  If the last (played) vertex has uncolored neighbors, then there are at most six neighbors, so the highest nimber value is $\ast 6$.

    If the previously-played vertex is fully surrounded by colored vertices\footnote{In this case, the next player gets a ``jump'' and gets to play anywhere on the board.}, then there are two possibilities: either all playable vertices have uncolored neighbors, or some of the playable vertices are also fully surrounded.  In the first case, there may be options to all nimbers $0, \ast, \ast 2, \ldots, \ast 6$, so the value here could be up to $\ast 7$.

    In the second case, the current position, say $G$, is equal to the sum of the portion of the board (say, $H$) without those fully-surrounded (but playable) vertices  and the portion of the board with only those vertices.  Each of those vertices in $G \setminus H$ changes the value by $\ast$. Thus, if there are $k$ of them, $G = H + k \times \ast$. Thus, either $G = H$ or $G = H + \ast$. By the previous case, the nimber of $H$ can be up to $\ast 7$, so the value of $G$ can also be at most $\ast 6$ or $\ast 7$.

    \ruleset{Atropos} has a bounded nimber, so it cannot be Sprague-Grundy complete.
\end{proof}

For (2), both \ruleset{Undirected Geography} \cite{DBLP:journals/tcs/FraenkelSU93} and  \ruleset{Uncooperative Uno} \cite{demaine2014uno} \footnote{In this game, there are two hands, $H_1$ and $H_2$, which each consist of a set of cards. This is a perfect information game, so both players may see each other's hands. Each card has two attributes, a color $c$ and a rank $r$. Each card then thus be represented $(c, r)$. A card can only be played in the center (shared) pile if the previous card matches either the $c$ of the current card or the $r$ of the current card.} are not Sprague-Grundy complete for ${\cal I}^P$---unless \cclass{P} $=$ \cclass{PSPACE}---despite their nimber intractability, with \ruleset{Undirected Geography} being known to have polynomially high nimber positions \cite{BurkeFerlandTengGrundy}.
For \ruleset{Undirected Geography},
Fraenkel, Scheinerman, and Ullman \cite{DBLP:journals/tcs/FraenkelSU93} presented a matching-based characterization to show these games are polynomial-time solvable.
For \ruleset{Uncooperative Uno}, Demaine {\em et al} \cite{demaine2014uno}
presented a polynomial-time reduction to \ruleset{Undirected Geography}.
Thus, any polynomial-time nimber-preserving reduction from ${\cal I}^P$
to \ruleset{Undirected Geography} (or \ruleset{Uncooperative Uno})
would yield a polynomial-time algorithm for solving ${\cal I}^P$.

Rulesets which have nimber preserving reductions from \ruleset{Generalized Geography} are Sprague-Grundy complete.
A simple example is the vertex version of \ruleset{DiGraph Nim} \cite{DBLP:journals/tcs/Fukuyama03},
  in which
  each node has a \ruleset{Nim} pile and players can only
  move to a reachable node in a directed graph from the current node to pick stones.
When every pile has one stone, the game is equivalent to \ruleset{Generalized Geography}
with the underlying graph.
An interesting question is whether \ruleset{Neighboring Nim} (with a polynomial number of stones)---a \cclass{PSPACE}-complete version of \ruleset{Nim} played on an undirected graph \cite{BurkeGeorge}---is Sprague-Grundy complete.

The edge variant of \ruleset{Generalized Geography},
known as the \ruleset{Edge-Geography},
considered in the literature
\cite{DBLP:journals/jcss/Schaefer78,LichtensteinSipser:1980,DBLP:journals/tcs/FraenkelSU93} presents a natural extension.
This is a version of \ruleset{Geography} where instead of deleting the current node after the token moves away,
it is the edge traversed
by the token that is deleted.
\ruleset{Edge-Geography} and
its undirected sub-family, \ruleset{Undirected Edge-Geography} are both \cclass{PSPACE}-complete.
The following proof sketch shows that \ruleset{Edge-Geography} remains Sprague-Grundy complete.

\begin{corollary}
\ruleset{Edge-Geography} is prime Sprague-Grundy-complete for $\cal{I}^P$.
\end{corollary}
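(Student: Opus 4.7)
The plan is to reprove Theorem \ref{Theo:Restate1.2} essentially verbatim in \ruleset{Edge-Geography}. We would reuse the combining gadgets of Figures \ref{fig:Qs}, \ref{fig:stars}, and \ref{fig:GeoRed} as directed graphs, now interpreted under the edge-deletion rule, and replace only the classical QSAT-to-\ruleset{Generalized Geography} reduction $f$ by its analogue $f^{EG}$ into \ruleset{Edge-Geography}. Such an $f^{EG}$ exists from the known \cclass{PSPACE}-completeness of \ruleset{Edge-Geography}, and a Lichtenstein--Sipser-style variable-diamond construction adapted to edge-deletion gives a starting vertex $s_i$ with exactly two options, whose nimber equals $0$ exactly when $Q_i$ is false and lies in $\{\ast, \ast 2\}$ otherwise.

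The technical linchpin that lets the rest of the proof transfer is the following DAG coincidence, which we would establish first by induction on the sub-DAG rooted at the current vertex: for any directed acyclic graph $D$ with source $s$, the \ruleset{Generalized Geography} and \ruleset{Edge-Geography} positions starting at $s$ have the same Grundy value. In a DAG the token only advances, so edges behind it (deleted under the edge rule) and vertices behind it (deleted under the vertex rule) are unreachable in the remaining play; their removal does not affect the Grundy value of the forward subgame.

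With this lemma in hand, the rest is a direct transcription. The entire outer combining structure---the $t$-gadget, the chains $b_i \to a_i \to s_i$, and the $c_i, d_i, start$ layer---is a DAG rooted at $start$, so the mex computations of Lemmas \ref{lem:cvalues}, \ref{lem:diG0}, and \ref{lem:diGk} transfer without modification. The non-DAG sub-instances $f^{EG}(Q_i)$ attach only at the $s_i$ leaves, and the one-way downward flow ensures that each $s_i$'s edge-geography nimber coincides with its intended value, so assembling the mex at $start$ yields an \ruleset{Edge-Geography} position with nimber $\nimber(G)$. Primality follows by a direct replay of the argument in Theorem \ref{theo:Prime}: the one-option children $b_0$ (value $0$) and $c_1$ (value $\ast$) of $start$ have incompatible subtree shapes under any candidate Cartesian-product factorization. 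The main obstacle we anticipate is cleanly verifying that $f^{EG}$ produces a two-option starting vertex; this should follow by treating the outermost $\exists x_1$ quantifier as a standard edge-based diamond gadget.
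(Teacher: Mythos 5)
Your proposal matches the paper's proof in all essentials: reduce each $Q_i$ directly into \ruleset{Edge-Geography} via its known \cclass{PSPACE}-completeness, observe that the outer combining gadgetry is acyclic so that the vertex-deletion and edge-deletion conventions give isomorphic play there, and then let the mex computations of Lemmas \ref{lem:cvalues}--\ref{lem:diGk} carry over unchanged. The only divergence is in how the primality wrinkle is resolved: you constrain $f^{EG}$ to have a two-option start vertex so that Theorem \ref{theo:Prime} replays verbatim, whereas the paper leaves the reduction arbitrary and instead widens the chain $b_i \to a_i \to s_i$ into a small diamond ($b_i$ to two vertices $a_{i,1}, a_{i,2}$, each with a single edge to $s_i$) to control where the first branching occurs---both devices address exactly the out-degree-one concern you correctly flagged as the main obstacle.
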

\begin{proof}
We can follow the early parts of the proof for \ruleset{Generalized Geography}. We reduce from all polynomially-short games, creating a game of \ruleset{Edge-Geography} for each. Then, for each game, we again append two ``filler'' moves to the beginning, to ensure that it is exactly 0 or $\ast$.

We can then reuse our scheme from figure \ref{fig:Qs}. Since there are no cycles in that gadget, play between both \ruleset{Edge-Geography} and \ruleset{Generalized Geography} is identical.

Of course, the primality section required knowing that the main \ruleset{Geography} game didn't start with an out degree of only one. To fix this, we can simply have $v_b$ go to $v_{a1}$ and $v_{a2}$ which both only have a single edge to $v_{s}$.
\end{proof}

It remains open whether
\ruleset{Undirected Edge-Geography} is Sprague-Grundy complete.

In addition to these
rulesets
adjacent to \ruleset{Generalized Geography}, we are interested in the following three
well-studied games:

\begin{itemize}
\item Is \ruleset{Node Kayles}
Sprague-Grundy complete for ${\cal I}^P$?
Is \ruleset{Avoid True}
Sprague-Grundy complete for ${\cal I}^P$?  Is \ruleset{Generalized Chomp} Sprague-Grundy complete for ${\cal I}^P$?
\end{itemize}
In our proof for \ruleset{Generalized Geography}, we critically use the ``locality'' in this graph-theoretical game:
The options are defined by the graph-neighbors of the current node.
Both \ruleset{Node-Kayles} and \ruleset{Avoid True} are far more ``global''; there is no need for moves to be near the previous move.
We are also interested in \ruleset{Generalized Chomp} because the hierarchical structures from partial orders could be instrumental to analyses.

\ruleset{Node-Kayles}---see below for more discussion---also suggests
the following basic structural question:
\begin{itemize}
\item Is there a natural ruleset in ${\cal I}^P$ that is Sprague-Grundy-complete for ${\cal I}^P$ but not {\bf\em prime}  Sprague-Grundy-complete for ${\cal I}^P$?
\end{itemize}


\subsection{Game Encoding and Computational Homomorphism}

Let's call a family ${\cal H}$ of impartial rulesets satisfying
Theorem \ref{theo:Homomorphic} (in place of ${\cal I}^{P}$ and with a prime game of ${\cal I}^{P}$ in place of \ruleset{Generalized Geography})
a {\em computationally-homomorphic family}.
Note that for any ${\cal J}$ including \ruleset{Undirected Geography}, ${\cal J}$ satisfies  Theorem \ref{theo:Homomorphic}.

Now suppose
we ``slightly'' weaken
Theorem \ref{theo:Homomorphic}
by removing the prime-game
requirement (in the {\em Homomorphic Game Encoding} condition), and call
$\cal H$ satisfying the weakened version of Theorem \ref{theo:Homomorphic}
a
{\em weakly computationally-homomorphic family}.
Then,
${\cal I}^P$ itself is
a weakly computationally-homomorphic family, by Sprague-Grundy Theory
and the fact that ${\cal I}^P$ is closed under the disjunctive sum.

Indeed,
if  a ruleset in ${\cal I}^P$ is \cclass{PSPACE}-complete and allows a simple way to express the sum of two positions as a single position, then the ruleset is a
weakly computationally-homomorphic family.
One of the most basic examples of this is \ruleset{Node Kayles}. Here, two positions can be trivially summed into a single game by simply taking the two graphs and making them a single (disconnected) graph.

This is a very common property for combinatorial games to have. However, many impartial games with this property aren't known to be intractable. As an example, \ruleset{Cram} is a game that is simply played by placing 2x1 dominoes in either horizontal or vertical orientation on unoccupied tiles of a 2-dimensional grid.  Two \ruleset{Cram} positions can be added together by surrounding each with a boundary of dominoes, then concatenating the two boards together.  Unfortunately, it is not currently known whether \ruleset{Cram} is intractable. 


Related to the question we asked in
Section \ref{sub:SGC}, we are curious to know:

\begin{itemize}
\item Given a pair of \ruleset{Node Kayles}
positions $G_1$ and $G_2$, can we construct, in polynomial time, a {\bf\em prime}
\ruleset{Node Kayles}
position satisfying
$\nimber(G) = \nimber(G_1)\oplus \nimber(G_2)$?
\end{itemize}

\subsection{Beyond ${\cal I}^P$}
More generally,

\begin{itemize}
\item Are there analog extensions of our results to polynomially-short partizan games?
\item
Is there a characterization of
Sprague-Grundy completeness for
${\cal I}^P$?

\item
Does the family of \cclass{PSPACE}-solvable impartial games have a natural Sprague-Grundy-complete ruleset?

\item
Does the family of all impartial games have a natural Sprague-Grundy-complete ruleset?

\item What is the complexity of \ruleset{Graph Nim} with an exponential number of stones?
\end{itemize}

For these last few questions, we may need to go beyond \cclass{PSPACE} as well as polynomially-short games to unlock the nimber secrets.

\subsection{Finally}

Is there a Bouton analog---i.e., a more clean and direct graph operator---to compute a
\ruleset{Generalized Geography} game $G$ from two \ruleset{Generalized Geography} games $G_1$ and $G_2$
such that $\nimber(G) = \nimber(G_1)\oplus\nimber(G_2)$?

\bibliographystyle{plain}

\end{document}

